\title{\LARGE \bf
Stability Verification of Neural Network Controllers using Mixed-Integer Programming
}
\author{Roland Schwan$^{1,2}$, Colin N. Jones$^{1}$, and Daniel Kuhn$^{2}$%
\thanks{This work was supported by the Swiss National Science Foundation under the NCCR Automation project, grant agreement 51NF40\_180545.}%
\thanks{We thank Silvia Mastellone for valuable feedback on an earlier version of this paper, and Emilio Maddalena
for providing the experimental setup of the DC-DC power convertor.}%
\thanks{$^{1}$Roland Schwan and Colin N. Jones are with the Automatic Control Lab, EPFL, Switzerland.}%
\thanks{$^{2}$Roland Schwan and Daniel Kuhn are with the Risk Analytics and Optimization Chair, EPFL, Switzerland.}%
\thanks{{\tt\footnotesize \{roland.schwan, colin.jones, daniel.kuhn\}@epfl.ch}}%
}
\begin{document}

\maketitle
\thispagestyle{empty}
\pagestyle{empty}

\begin{abstract}
We propose a framework for the stability verification of Mixed-Integer Linear Programming (MILP) representable control policies. This framework compares a fixed candidate policy, which admits an efficient parameterization and can be evaluated at a low computational cost, against a fixed baseline policy, which is known to be stable but expensive to evaluate. We provide sufficient conditions for the closed-loop stability of the candidate policy in terms of the worst-case approximation error with respect to the baseline policy, and we show that these conditions can be checked by solving a Mixed-Integer Quadratic Program (MIQP). Additionally, we demonstrate that an outer and inner approximation of the stability region of the candidate policy can be computed by solving an MILP. The proposed framework is sufficiently general to accommodate a broad range of candidate policies including ReLU Neural Networks (NNs), optimal solution maps of parametric quadratic programs, and Model Predictive Control (MPC) policies. We also present an open-source toolbox in Python based on the proposed framework, which allows for the easy verification of custom NN architectures and MPC formulations. We showcase the flexibility and reliability of our framework in the context of a DC-DC power converter case study and investigate its computational complexity.
\end{abstract}

\section{Introduction}

MPC has been extremely successful in control applications for refineries and chemical plants \cite{qin2003}, building control \cite{oldewurtel2012}, the control of quadcopters \cite{mueller2013}, robotics \cite{neunert2018}, and power electronics \cite{karamanakos2020a}, \cite{maddalena2021}. The main advantages of MPC are its versatility, stability, and ability to account for input and state constraints. With the transition from the process industry to robotics and power electronics, the sampling times have decreased from hours to only a few milli- or even microseconds. This is especially challenging if one wants to deploy controllers on embedded systems with low computational resources and limited memory.

Hence, ideally, one would like to perform all heavy computations offline and precompute the optimal control law $\psi^\star(\cdot)$ that maps any feasible state to an optimal control input. A well-known technique to compute such an optimal control law~$\psi^\star(\cdot)$ is \textit{explicit} MPC \cite{alessio2009}. For MPC controllers with quadratic cost functions and linear dynamics, $\psi^\star(\cdot)$ is a piecewise affine function defined over a polyhedral partition of the state space. Thus, in contrast to \textit{implicit} MPC, which computes the optimal control input online by solving a different optimization problem for each state, \textit{explicit} MPC precomputes the optimal affine control policy across predefined polytopic regions of the state space. Unfortunately, the required number of polytopes explodes with the dimension of the state and the number of constraints, which render explicit MPC intractable for larger systems. Additionally, the online search for the polytope containing the current state may require excessive processing power or storage space. Although these computational challenges can be mitigated by using search trees \cite{jones2006} or hash tables \cite{bayat2011}, the required memory may remain too large \cite{kvasnica2012}, especially for embedded systems. Recent attempts to approximate explicit MPC directly either suffer from a curse of dimensionality \cite{domahidi2011}, or they rely on expensive set projections \cite{chen2018}, \cite{paulson2020}.

The limited scalability of explicit MPC and its variants has promoted interest in general function approximators of MPC policies, such as deep NNs \cite{psaltis1988}.
Deep NNs are attractive because they can exactly represent predictive controllers for linear systems \cite{karg2020a}, while being relatively inexpensive during online inference \cite{hornik1989}. Unfortunately, one loses the stability guarantees for the learned controllers. Statistical methods are one way to verify the stability of the learned controllers \cite{hertneck2018}, with extensions to filter out severely suboptimal control inputs with high probability \cite{zhang2021}.

However, NNs are not the only viable function approximators. In fact, every continuous nonlinear control law can be represented as the minimizer mapping of a parametric convex program \cite{baes2008}. Even the solutions of parametric linear programs (LPs) can represent any continuous piecewise affine function \cite{hempel2013}. Hence, parametric optimization problems give rise to implicit function approximators, and by leveraging the implicit function theorem, one can directly optimize and learn the underlying problem parameters.
Approaches for calibrating parametric quadratic programs (QPs), general convex programs, and root finding problems are described in \cite{amos2017}, \cite{agrawal2019}, and \cite{bai2019}, respectively.

NN control policies have been successfully employed in applications for controlling chemical plants \cite{kumar2021}, robotic arms \cite{nubert2020} or DC-DC power electronic converters \cite{maddalena2021} at a significant increase in computational speed. Thus, NN-based controllers can not only result in better control performance thanks to a tighter control loop, but also in drastic savings of computational resources.

\subsection{Related Work}

The idea to approximate predictive controllers with NNs enjoys growing popularity. Some of the earliest work dates back to the 90s, when single hidden layer NN approximations were used to learn a nonlinear MPC policy \cite{parisini1995}. More recently, plain vanilla NN architectures have been enhanced with parametric QP implicit layers \cite{maddalena2020}. Closed-loop stability can be guaranteed by projecting the output of the NN into a safety set \cite{paulson2020}. This projection can be viewed as a ``Safety Filter'' \cite{hewing2020}. However, a computationally expensive optimization problem has to be solved online.

The satisfaction of the closed-loop state-input constraints and stability requirements can also be verified by performing a reachability analysis using an MILP representation of the NN controller \cite{karg2020b} or by over approximating the NN output bounds via semidefinite programming formulations \cite{fazlyab2022}.

Other approaches show closed-loop stability by learning and verifying Lyapunov functions \cite{dai2020}, \cite{chen2021}. However, the underlying learner/verifier pattern can be computationally demanding because in each iteration adversarial points have to be found (verifier), and a new candidate Lyapunov function must be constructed, without guarantees of convergence.

Alternatively, one can guarantee stability by using an MILP framework to combine the worst-case approximation error of the NN controller with its Lipschitz constant \cite{fabiani2021}.
In this paper, we propose direct sufficient conditions for closed-loop stability, which can be verified by solving an MIQP. We show empirically that our approach is less conservative but computationally more expensive than that proposed in~\cite{fabiani2021} because MIQPs are generically harder then MILPs.

\begin{diffadd}
Our approach to stability certification is closely related to robustness certification techniques in image classification based on MILPs \cite{bunel2018, tjeng2019} or GPU-accelerated methods \cite{mueller2021}. We refer to these papers for a comprehensive literature review. While we use the same MILP formulations to represent NNs, we extend these MILPs to a broader function classes that are sufficiently expressive to represent parametric QPs, which are needed for the verification of predictive controllers.
\end{diffadd}

\subsection{Contributions}

We introduce a general framework for the verification of NN controllers via mixed-integer programming (MIP) together with an open source toolbox.\footnote{The toolbox can be accessed under the following link:\\\url{https://github.com/PREDICT-EPFL/evanqp}} Specifically, we introduce the concept of MILP-representable verification problems, which is sufficiently general to cater for a variety of commonly used control policies such as ReLU NNs and the optimizer maps of parametric quadratic programs, which include common MPC policies. Given a baseline policy (e.g., an MPC policy) and an approximate policy (e.g., a NN), we propose two  approaches to verify closed-loop stability of the approximate policy:
\begin{itemize}
    \item By computing the worst-case approximation error. Utilizing robust MPC schemes such as Tube MPC \cite{mayne2005}, we can guarantee closed-loop stability by constraining the approximation error to fall within the disturbance set of the robust MPC scheme.
    \item By providing sufficient conditions for closed-loop stability, which can be directly formulated and checked by solving an MIQP. We show that our conditions are less conservative in practice than existing methods, i.e., they are able to verify the closed-loop stability of a larger class of approximate policies. In addition, we show that outer and inner approximations of the stability region of the approximate control law can be found by solving MILPs.
\end{itemize}
We then exemplify the application of the proposed method on a case-study of a DC-DC power convertor using different NN architectures and MPC formulations. We compare our methods against state-of-the-art approaches and show the superiority of our formulation. Additionally, we provide numerical experiments to show the performance of our approach.

The toolbox is written in Python and allows the automatic transcription of the discussed verification problems as MIPs. It can directly import NN architectures from Pytorch \cite{paszke2019} and parametric QPs from CVXPY \cite{diamond2016}. This not only allows for a simplified problem formulation but also a tight integration with the existing machine learning ecosystem.

\section*{Notation}
We denote the set of real numbers by $\mathbb{R}$, the set of $n$-dimensional real-valued vectors by $\mathbb{R}^n$ and the set of $n \times m$-dimensional real-valued matrices by $\mathbb{R}^{n \times m}$. Furthermore, we denote the subspace of symmetric matrices in $\mathbb{R}^{n \times n}$ by $\mathbb{S}^n$ and the cone of positive semi-definite and definite matrices by $\mathbb{S}^n_+$ and $\mathbb{S}^n_{++}$, respectively. We use $I_n$ to denote the $n$-dimensional identity matrix, $\boldsymbol{1}_n$ to denote the $n$-dimensional column vector of ones, and $\diag(\cdot)$ to represent the mapping that transforms a column vector to the corresponding diagonal matrix. Given two sets $A,B\subseteq \mathbb R^n$, we denote their Minkowski sum as $A \oplus B \coloneqq \left\{ a + b \;\middle|\; a \in A, b \in B \right\}$ and their Pontryagin difference as $A \ominus B \coloneqq \left\{ a \;\middle|\; a \oplus B \subseteq A \right\}$. The interior of a set $S\subseteq \mathbb R^n$ is denoted by $\inter(S)$. \begin{diffadd}Finally, we use $\gr(f)=\left\{(x,y) \;\middle|\; y = f(x)\right\}$ to denote the graph of a function $f$.\end{diffadd}

\section{MILP-Representable Verification Problems}

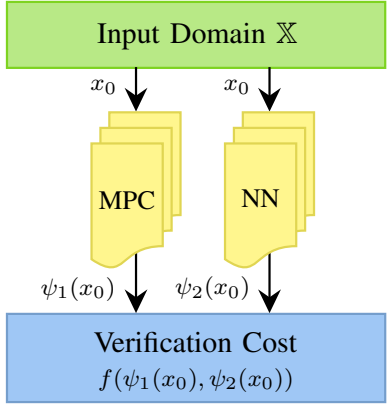
\begin{figure}[ht]
\centering
\resizebox{0.3\textwidth}{!}{\tikzset{every picture/.style={line width=0.75pt}} %

\begin{tikzpicture}[x=0.75pt,y=0.75pt,yscale=-1,xscale=1]

\draw  [color={rgb, 255:red, 126; green, 211; blue, 33 }  ,draw opacity=1 ][fill={rgb, 255:red, 184; green, 233; blue, 134 }  ,fill opacity=1 ] (17,10) -- (187,10) -- (187,39.67) -- (17,39.67) -- cycle ;
\draw  [color={rgb, 255:red, 117; green, 158; blue, 204 }  ,draw opacity=1 ][fill={rgb, 255:red, 159; green, 199; blue, 245 }  ,fill opacity=1 ] (17,150.33) -- (187,150.33) -- (187,190) -- (17,190) -- cycle ;
\draw    (75,39.33) -- (75,56.33) ;
\draw [shift={(75,59.33)}, rotate = 270] [fill={rgb, 255:red, 0; green, 0; blue, 0 }  ][line width=0.08]  [draw opacity=0] (10.72,-5.15) -- (0,0) -- (10.72,5.15) -- (7.12,0) -- cycle    ;
\draw    (135,39.33) -- (135,56.33) ;
\draw [shift={(135,59.33)}, rotate = 270] [fill={rgb, 255:red, 0; green, 0; blue, 0 }  ][line width=0.08]  [draw opacity=0] (10.72,-5.15) -- (0,0) -- (10.72,5.15) -- (7.12,0) -- cycle    ;
\draw    (75,119.67) -- (75,147) ;
\draw [shift={(75,150)}, rotate = 270] [fill={rgb, 255:red, 0; green, 0; blue, 0 }  ][line width=0.08]  [draw opacity=0] (10.72,-5.15) -- (0,0) -- (10.72,5.15) -- (7.12,0) -- cycle    ;
\draw    (135,119.67) -- (135,147) ;
\draw [shift={(135,150)}, rotate = 270] [fill={rgb, 255:red, 0; green, 0; blue, 0 }  ][line width=0.08]  [draw opacity=0] (10.72,-5.15) -- (0,0) -- (10.72,5.15) -- (7.12,0) -- cycle    ;
\draw  [color={rgb, 255:red, 224; green, 213; blue, 82 }  ,draw opacity=1 ][fill={rgb, 255:red, 255; green, 246; blue, 142 }  ,fill opacity=1 ] (63,59.67) -- (95,59.67) -- (95,105.87) .. controls (75,105.87) and (79,122.53) .. (63,111.75) -- cycle ; \draw  [color={rgb, 255:red, 224; green, 213; blue, 82 }  ,draw opacity=1 ][fill={rgb, 255:red, 255; green, 246; blue, 142 }  ,fill opacity=1 ] (59,66.67) -- (91,66.67) -- (91,112.87) .. controls (71,112.87) and (75,129.53) .. (59,118.75) -- cycle ; \draw  [color={rgb, 255:red, 224; green, 213; blue, 82 }  ,draw opacity=1 ][fill={rgb, 255:red, 255; green, 246; blue, 142 }  ,fill opacity=1 ] (55,73.67) -- (87,73.67) -- (87,119.87) .. controls (67,119.87) and (71,136.53) .. (55,125.75) -- cycle ;
\draw  [color={rgb, 255:red, 224; green, 213; blue, 82 }  ,draw opacity=1 ][fill={rgb, 255:red, 255; green, 246; blue, 142 }  ,fill opacity=1 ] (123,59.67) -- (155,59.67) -- (155,105.87) .. controls (135,105.87) and (139,122.53) .. (123,111.75) -- cycle ; \draw  [color={rgb, 255:red, 224; green, 213; blue, 82 }  ,draw opacity=1 ][fill={rgb, 255:red, 255; green, 246; blue, 142 }  ,fill opacity=1 ] (119,66.67) -- (151,66.67) -- (151,112.87) .. controls (131,112.87) and (135,129.53) .. (119,118.75) -- cycle ; \draw  [color={rgb, 255:red, 224; green, 213; blue, 82 }  ,draw opacity=1 ][fill={rgb, 255:red, 255; green, 246; blue, 142 }  ,fill opacity=1 ] (115,73.67) -- (147,73.67) -- (147,119.87) .. controls (127,119.87) and (131,136.53) .. (115,125.75) -- cycle ;

\draw (102.61,25.17) node   [align=left] {Input Domain $\displaystyle \mathbb{X}$};
\draw (101.69,162.83) node   [align=left] {Verification Cost};
\draw (60.5,48.17) node  [font=\footnotesize]  {$x_{0}$};
\draw (120.5,48.17) node  [font=\footnotesize]  {$x_{0}$};
\draw (49,139) node  [font=\footnotesize]  {$\psi _{1}( x_{0})$};
\draw (109.5,138.83) node  [font=\footnotesize]  {$\psi _{2}( x_{0})$};
\draw (71,99.17) node  [font=\small] [align=left] {MPC};
\draw (131,98.17) node  [font=\small] [align=left] {NN};
\draw (102,180.17) node  [font=\footnotesize]  {$f( \psi _{1}( x_{0}) ,\psi _{2}( x_{0}))$};

\end{tikzpicture}}
\caption{Schematic of the verification approach.}
\label{fig:overview}
\end{figure}

A verification problem consists of a baseline control policy $\psi_1(x)$ (e.g., an MPC policy) and an approximate policy $\psi_2(x)$ (e.g., a NN) on a common input domain $\mathbb{X}$. The outputs of the policies then enter a verification cost function $f(\psi_1(x),\psi_2(x))$ that represents the objective of the verification problem. An overview of the proposed architecture is shown in Figure~\ref{fig:overview}.

\begin{diffadd}
In the following, we define the concept of MILP-representable functions. A MILP-representable function can be represented exactly by linear equality and inequalities with continuous and binary decision variables.
\end{diffadd}

\begin{definition} \label{def:MILP_rep}
A function $\psi: \mathbb{X} \rightarrow \mathbb{U}$ with domain $\mathbb{X} \subseteq \mathbb{R}^{n}$ and range $\mathbb{U} \subseteq \mathbb{R}^{m}$ is MILP-representable if there exists a polyhedral set $P \subseteq \mathbb{R}^{n} \times \mathbb{R}^{m} \times \mathbb{R}^{c} \times \mathbb{R}^{b}$ such that $(x,u)\in \gr(\psi)$ if and only if there exists $z \in \mathbb{R}^c$ and $\beta \in \{0,1\}^b$ such that $(x,u,z,\beta) \in P$.
\end{definition}

A wide range of functions are MILP-representable. In fact, one can show that the MILP-representable functions are dense in the family of continuous functions on a compact domain $\mathbb{X}$ with respect to the supremum norm. Examples of MILP-representable functions include the optimal solution maps of parametric QPs (e.g., MPC policies), ReLU NNs, and piecewise affine functions. Additionally, compositions of MILP-representable functions are also MILP-representable.
\begin{lemma} \label{lem:MILP_comp}
If two functions $\psi_i: \mathbb{X}_i \rightarrow \mathbb{U}_i, \; i=1,2$ with $\mathbb{X}_2 \subseteq \mathbb{U}_1$ are both MILP-representable, then the composition $\psi_1 \circ \psi_2: \mathbb{X}_1 \rightarrow \mathbb{U}_2$ is MILP-representable.
\end{lemma}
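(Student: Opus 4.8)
The plan is to build the polyhedral certificate for the composition by gluing the two given representations along a single shared intermediate variable that carries the value passed between the two maps. First I would invoke the MILP representability of each $\psi_i$ to obtain polyhedra $P_i \subseteq \mathbb{R}^{n_i}\times\mathbb{R}^{m_i}\times\mathbb{R}^{c_i}\times\mathbb{R}^{b_i}$ such that $(x_i,u_i)\in\gr(\psi_i)$ if and only if there exist $z_i\in\mathbb{R}^{c_i}$ and $\beta_i\in\{0,1\}^{b_i}$ with $(x_i,u_i,z_i,\beta_i)\in P_i$. Because $\mathbb{X}_2 \subseteq \mathbb{U}_1$, the space in which the output of one map lives coincides with the space of the input of the other; in particular $m_1 = n_2$, so the representations can legitimately share a common connecting variable $w\in\mathbb{R}^{m_1}$.

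Next I would propose the candidate set
\[
P \coloneqq \bigl\{ (x,u,(w,z_1,z_2),(\beta_1,\beta_2)) : (x,w,z_1,\beta_1)\in P_1,\ (w,u,z_2,\beta_2)\in P_2 \bigr\}.
\]
Each of the two membership constraints is the preimage of $P_i$ under the coordinate projection that extracts the relevant block (an affine map), so $P$ is the intersection of two polyhedra in the lifted product space and is therefore itself polyhedral. Setting $z\coloneqq(w,z_1,z_2)$ and $\beta\coloneqq(\beta_1,\beta_2)$ puts this in the exact format required by Definition~\ref{def:MILP_rep}, with $c = m_1+c_1+c_2$ continuous auxiliary variables and $b = b_1+b_2$ binary variables.

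It then remains to verify both directions of the equivalence $(x,u)\in\gr(\psi_1\circ\psi_2)$ $\Leftrightarrow$ feasibility of some $(x,u,z,\beta)\in P$. For the forward direction I would set $w$ equal to the intermediate value produced by the inner map, observe that this yields $(x,w)\in\gr(\psi_1)$ and $(w,u)\in\gr(\psi_2)$, and pull each back through its representation to feasible certificates $(z_1,\beta_1)$ and $(z_2,\beta_2)$, which assemble into a feasible point of $P$. For the reverse direction, feasibility of a point of $P$ with $\beta$ genuinely $\{0,1\}$-valued certifies $(x,w)\in\gr(\psi_1)$ and $(w,u)\in\gr(\psi_2)$ simultaneously; since $\psi_1$ is single-valued the connecting variable is forced to equal $\psi_1(x)$, whence $u=\psi_2(w)=(\psi_1\circ\psi_2)(x)$.

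The main point requiring care is not the polyhedral algebra, which is routine, but the bookkeeping around the shared variable $w$: I must check that the containment hypothesis $\mathbb{X}_2 \subseteq \mathbb{U}_1$ is precisely what makes the composition well-defined and guarantees that the glued value $w$ lands in the domain on which the outer map is represented, so that the existential quantifier over $w$ neither discards nor spuriously creates graph points. A secondary detail to keep explicit is that the concatenated binary block $(\beta_1,\beta_2)$ remains constrained to $\{0,1\}^{b_1+b_2}$ rather than being relaxed, so that the equivalence in Definition~\ref{def:MILP_rep} is preserved through the construction.
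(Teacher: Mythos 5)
Your proposal is correct and follows essentially the same construction as the paper's proof: the shared connecting variable $w$ is exactly the paper's intermediate output $u_1$, which the paper likewise folds into the continuous auxiliary block via $z=(z_1,z_2,u_1)$ and $\beta=(\beta_1,\beta_2)$. Your write-up is merely more explicit about why the glued set is polyhedral and about checking both directions of the graph equivalence, which the paper leaves as "by construction."
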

\begin{proof}
Let $P_1 \subseteq \mathbb{R}^{n} \times \mathbb{R}^{m} \times \mathbb{R}^{c_1} \times \mathbb{R}^{b_1}$ and $P_2 \subseteq \mathbb{R}^{m} \times \mathbb{R}^{o} \times \mathbb{R}^{c_2} \times \mathbb{R}^{b_2}$ be the polyhedral sets corresponding to the MILP representations of $\psi_1$ and $\psi_2$ respectively. Then, $(x_i,u_i) \in \gr(\psi_i)$ if and only if there exist $z_i \in \mathbb{R}^{c_i}$ and $\beta_i \in \{0,1\}^{b_i}$ such that $(x_i,u_i,z_i,\beta_i) \in P_i$ for $i=1,2$. To compute the composition of $\psi_1$ and $\psi_2$, we now set the output $u_1$ of $\psi_1$ equal to the input $x_2$ of $\psi_2$. Specifically, we define new variables $z=(z_1,z_2,u_1)$ and $\beta=(\beta_1,\beta_2)$, \begin{diffadd}and we construct a new polyhedral set $P \subseteq \mathbb{R}^{n} \times \mathbb{R}^{o} \times \mathbb{R}^{c_1+c_2+m} \times \mathbb{R}^{b_1+b_2}$, which includes all linear constraints defining the original polyhedra~$P_1$ and$P_2$ as well as the equality constraint $u_1=x_2$.\end{diffadd} %
Hence, by construction, it holds that $(x_1,u_2)\in \gr(\psi_1 \circ \psi_2)$ if and only if $(x_1,u_2,z,\beta)\in P$, showing that the composition $\psi_1 \circ \psi_2$ is indeed MILP-representable.
\end{proof}

In order to formalize the verification problem for comparing the approximate control policy $\psi_2(\cdot)$ against the baseline policy $\psi_1(\cdot)$, define $\mathbb{X}$ as the relevant input domain of the control polices. For any property to be verified, we can pose the verification problem as the task of certifying the non-negativity of a function $f(\cdot)$ over the input domain $\mathbb{X}$.
\begin{equation} \label{eq:veri_ineq}
    0 \leq \min_{x_{0} \in \mathbb{X}} f(\psi_1(x_{0}) \psi_2(x_{0})).
\end{equation}

\begin{proposition} \label{lem:veri_MILP_formulation}
If $\mathbb{X}$ is a polyhedron and if $\psi_1$, $\psi_2$, and $f$ are MILP-representable, then the verification problem on the right-hand side of \eqref{eq:veri_ineq} can be posed as the following optimization problem, which is equivalent to an MILP:
\begin{equation} \label{eq:veri_milp}
\begin{aligned}
\min_{\tau,x_0,u_1,u_2}  \quad & \tau \\
\text { s.t. } \quad\;\;\: & x_{0} \in \mathbb{X} \\
& (x_0,u_1) \in \gr(\psi_1) \\ 
& (x_0,u_2) \in \gr(\psi_2) \\
& (u_1,u_2,\tau) \in \gr(f).
\end{aligned}
\end{equation}
\end{proposition}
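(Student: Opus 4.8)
The plan is to prove the statement in two stages: first I would show that the optimal value of problem \eqref{eq:veri_milp} coincides with the right-hand side of \eqref{eq:veri_ineq}, and then I would show that \eqref{eq:veri_milp} can be rewritten as a genuine MILP by unfolding the MILP representations of $\psi_1$, $\psi_2$, and $f$ granted by Definition~\ref{def:MILP_rep}.

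For the first stage I would exploit that $\psi_1$, $\psi_2$, and $f$ are single-valued functions. Since $(x_0,u_1)\in\gr(\psi_1)$ holds if and only if $u_1=\psi_1(x_0)$, and likewise for $\psi_2$ and $f$, every feasible point of \eqref{eq:veri_milp} must satisfy $u_1=\psi_1(x_0)$, $u_2=\psi_2(x_0)$, and $\tau=f(\psi_1(x_0),\psi_2(x_0))$. Conversely, for every $x_0\in\mathbb{X}$ the tuple obtained from these assignments is feasible for \eqref{eq:veri_milp}. Hence the set of attainable values of $\tau$ is exactly $\{f(\psi_1(x_0),\psi_2(x_0)) : x_0\in\mathbb{X}\}$, so minimizing $\tau$ reproduces the right-hand side of \eqref{eq:veri_ineq}.

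For the second stage I would invoke Definition~\ref{def:MILP_rep} for each of the three graph constraints: there exist polyhedral sets $P_1$, $P_2$, and $P_f$ such that each graph membership is equivalent to the existence of auxiliary continuous variables $z_i$ and binary variables $\beta_i$ with the augmented tuple lying in the corresponding polyhedron. Substituting these characterizations into \eqref{eq:veri_milp}, I would collect all continuous variables ($x_0,u_1,u_2,\tau$ together with $z_1,z_2,z_3$) and all binary variables ($\beta_1,\beta_2,\beta_3$). The polyhedron $\mathbb{X}$ contributes further linear inequalities. The resulting problem minimizes the linear objective $\tau$ over an intersection of polyhedra subject to integrality of the $\beta_i$, which is by definition an MILP.

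The main obstacle, and really the only delicate point, is justifying that the existential quantifiers hidden in the MILP representation do not perturb the optimal value when the three constraints are unfolded simultaneously. Because the auxiliary variables $z_i,\beta_i$ enter only the polyhedral membership constraints and never the objective, and because $\tau$ is pinned down by $x_0$ through the chain of single-valued maps, the inner minimization over $(z_i,\beta_i)$ collapses to a feasibility question. I would therefore emphasize that existence of at least one admissible $(z_i,\beta_i)$ per graph point is precisely what Definition~\ref{def:MILP_rep} guarantees, so the projection of the unfolded feasible region onto $(x_0,u_1,u_2,\tau)$ matches the feasible set identified in the first stage, and the two optimal values agree.
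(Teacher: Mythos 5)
Your proposal is correct and follows essentially the same two-step route as the paper, which simply asserts that the equivalence of \eqref{eq:veri_ineq} and \eqref{eq:veri_milp} is immediate and that unfolding the MILP representations yields an MILP. Your write-up merely makes explicit what the paper leaves as ``readily verified'': the single-valuedness of $\psi_1$, $\psi_2$, $f$ pins down $(u_1,u_2,\tau)$ as a function of $x_0$, and the auxiliary variables $(z_i,\beta_i)$ from Definition~\ref{def:MILP_rep} affect only feasibility, not the projected optimal value.
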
\medskip
\begin{proof}
The equivalence of the optimization problem in \eqref{eq:veri_ineq} and \eqref{eq:veri_milp} is immediate. In addition, one readily verifies that problem \eqref{eq:veri_milp} is equivalent to an MILP because $\mathbb{X}$ is a polyhedron and $f$, $\psi_1$ and $\psi_2$ are MILP-representable.
\end{proof}

In the next sections, we consider specific MILP-representable functions relevant for control applications.

\subsection{ReLU Neural Networks} \label{subsec:relu_nn}

We now demonstrate that NNs with ReLU activation functions are MILP-representable. A fully-connected feed-forward NN $\psi: \mathbb{X} \subseteq \mathbb{R}^n \rightarrow \mathbb{U} \subseteq \mathbb{R}^m$ is a function representable as $\psi = \psi_\ell \circ \psi_{\ell-1} \circ \dots \circ \psi_1$, where $\psi_i: \mathbb{X}_i \subseteq \mathbb{R}^{n_{i-1}} \rightarrow \mathbb{U}_i \subseteq \mathbb{R}^{n_i}$ corresponding to the $i$-th layer is defined through the componentwise maximum
\begin{equation*}
    \psi_i(z_{i-1})=\relu(z_{i-1})=\max(0, W_iz_{i-1}+b_i)
\end{equation*}
for some weight matrix $W_i \in \mathbb{R}^{n_i \times n_{i-1}}$ and bias vector $b_i \in \mathbb{R}^{n_i}$, $i=1,\dots,\ell$, and where $n_0=n$ is the input dimension and $n_\ell=m$ is the output dimension of the NN.

To prove that $\psi$ is MILP-representable, we use the piecewise linearity of the ReLU activation functions to represent them via linear constraints and binary decision variables indicating which piece is active \cite{bunel2018, tjeng2019}.

\begin{lemma} \label{lem:NN}
If $\mathbb{X}$ is a compact polyhedron, then the ReLU NN $\psi: \mathbb{X} \rightarrow \mathbb{U}$ is MILP-representable, and there exist constants $\underline{m}_i$, $\overline{m}_i$ such that
\begin{equation} \label{eq:ffnn_milp}
    \gr(\psi) = 
    \left\{
    (x,u) \;\middle|\; \begin{aligned}
    \forall i &=1,\dots,n-1, \\
    \exists \beta_i & \in \{0,1\}^{n_i}, \\
    \exists z_i & \in \mathbb{R}^{n_i}: \\
    z_0 &= x, \\
    z_0 &\in \mathbb{X}, \\
    z_i &\geq 0\\
    z_i &\geq W_iz_{i-1} + b_i, \\
    z_i &\leq W_iz_{i-1} + b_i \\
        & \quad - \diag(\overline{m}_i)(\boldsymbol{1}-\beta_i),\\
    z_i &\leq \diag(\underline{m}_i)\beta_i,\\
    u &= W_nz_{n-1}+b_n
    \end{aligned}
    \right\}.
\end{equation}
\end{lemma}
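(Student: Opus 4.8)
The plan is to encode each ReLU layer by a \emph{big-M} mixed-integer formulation and stack these encodings across layers to reproduce the explicit constraint system in \eqref{eq:ffnn_milp}. In principle, MILP representability already follows by viewing $\psi$ as a composition of affine maps and componentwise $\max(0,\cdot)$ operations and invoking Lemma~\ref{lem:MILP_comp}; the actual work here is to exhibit the concrete graph representation with explicit big-M constants. The key preliminary is that, since $\mathbb{X}$ is a compact polyhedron and each layer map is continuous, the pre-activation $v_i \coloneqq W_i z_{i-1}+b_i$ ranges over a compact set as $x$ varies over $\mathbb{X}$. Hence there exist finite componentwise lower bounds $\overline{m}_i$ and upper bounds $\underline{m}_i$ with $\overline{m}_i \le v_i \le \underline{m}_i$ for all admissible inputs. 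These can be computed recursively (e.g.\ by interval arithmetic or by solving auxiliary LPs/MILPs for tighter values), but for the lemma only their existence and validity matter.

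I would then verify that, for a single scalar ReLU $z=\max(0,v)$ with valid bounds $\overline{m}\le v\le\underline{m}$, the inequalities
\begin{equation*}
z\ge 0,\quad z\ge v,\quad z\le v-\overline{m}(1-\beta),\quad z\le \underline{m}\,\beta,\qquad \beta\in\{0,1\},
\end{equation*}
admit a solution if and only if $z=\max(0,v)$. This reduces to a case split on $\beta$. For $\beta=1$ the last two constraints become $z\le v$ and $z\le\underline{m}$, which with $z\ge v$ force $z=v$; the constraint $z\ge 0$ then requires $v\ge 0$, giving $z=v=\max(0,v)$. For $\beta=0$ they become $z\le v-\overline{m}$ and $z\le 0$, which with $z\ge 0$ force $z=0$; the constraint $z\ge v$ then requires $v\le 0$, giving $z=0=\max(0,v)$. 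Conversely, for any $v$ one selects $\beta=1$ when $v\ge 0$ and $\beta=0$ otherwise, and the correct value $z=\max(0,v)$ satisfies all four inequalities precisely because $\overline{m}$ and $\underline{m}$ are valid bounds.

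Stacking these four inequalities over the $n_i$ components of each hidden layer $i=1,\dots,\ell-1$, introducing auxiliary variables $z_i\in\mathbb{R}^{n_i}$ and binaries $\beta_i\in\{0,1\}^{n_i}$, and appending the affine output relation $u=W_\ell z_{\ell-1}+b_\ell$ together with $z_0=x$ and $z_0\in\mathbb{X}$, reproduces exactly the system in \eqref{eq:ffnn_milp}. All constraints are linear in $(x,u,z,\beta)$, so the feasible set is a polyhedron intersected with the integrality requirement on the $\beta_i$, and by the componentwise argument its projection onto $(x,u)$ equals $\gr(\psi)$. This proves both the explicit identity and MILP representability in the sense of Definition~\ref{def:MILP_rep}.

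The main obstacle is not the big-M algebra, which is routine, but making the existence of the finite bounds $\overline{m}_i,\underline{m}_i$ rigorous, i.e.\ arguing that compactness of $\mathbb{X}$ propagates through the network. I would do this by induction on the layer index, using that affine maps and $\max(0,\cdot)$ are continuous and that continuous images of compact sets are compact, so each pre-activation set is bounded. A secondary point worth stating explicitly is that the encoding is exact for \emph{any} valid bounds---tightness affects only the strength of the continuous relaxation, not correctness---which is exactly why the lemma asserts the mere existence of suitable constants.
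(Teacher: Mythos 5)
Your proof is correct and follows essentially the same route as the paper's: encode a single ReLU neuron via the four big-M inequalities with a binary activity indicator, stack the encoding across neurons and layers (the paper invokes Lemma~\ref{lem:MILP_comp} for this step, which you also note as an alternative), and obtain the existence of the finite bounds from compactness of $\mathbb{X}$ together with continuity of the layer maps. Your explicit case split on $\beta$ and your reading of the statement's (nonstandard) convention that $\overline{m}_i$ plays the role of the lower bound and $\underline{m}_i$ the upper bound are both consistent with the formula in \eqref{eq:ffnn_milp}, and in fact make the argument more complete than the paper's sketch.
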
 \medskip
\begin{proof}
As $\mathbb{X}$ is bounded, there exist $\underline{x}, \overline{x} \in \mathbb{R}^n$ such that $\mathbb{X} \subseteq [\underline{x}, \overline{x}]$. Assume first that $\psi: \mathbb{X} \rightarrow \mathbb{U}$ is a single ReLU neuron defined through $u = \max(0, x)$. Then the graph of $\psi$ can be represented via mixed-integer constraints as follows.
\begin{equation} \label{eq:relu_graph}
    \gr(\psi) = 
    \left\{
    (x,u) \;\middle|\; \begin{aligned}
    \exists \beta &\in \{0,1\}^n: \\
    x &\in \mathbb{X}, \\
    u &\geq 0, \\
    u &\geq x, \\
    u &\leq x - \diag(\underline{x})(1-\beta), \\
    u &\leq \diag(\overline{x})\beta
    \end{aligned}
    \right\}
\end{equation}
Here, the binary decision variable $\beta$ indicates whether the neuron is active $(u=x)$ or inactive $(u=0)$. Extending this reasoning to the full NN and given the element-wise upper and lower bounds $\underline{m}_i \leq W_iz_{i-1}+b_i \leq \overline{m}_i$ on the input of the $i$-th activation layer, we can use Lemma~\ref{lem:MILP_comp} to derive the graph representation \eqref{eq:ffnn_milp}, where the binary decision variables $\beta_i$ correspond to layer $i$. The constants $\underline{m}_i$ and $\overline{m}_i$ always exist because $\mathbb{X}$ is compact, $\psi_i$ is continuous for every $i=1,\dots,\ell$, and continuous images of compact sets are compact.
\end{proof}

\begin{remark}
The bounds $\underline{m}_i$ and $\overline{m}_i$ can be computed using interval arithmetic \cite{wong2018}, zonotope propagation \cite{gehr2018}, or linear programming \cite{tjeng2019}. State-of-the-art MILP solvers such as Gurobi \cite{gurobi} are based on branch-and-cut methods. Hence, tightening the bounds $\underline{m}_i$ and $\overline{m}_i$ can reduce computation times in practice since branches can be pruned more efficiently. If $\underline{m}_i$ and $\overline{m}_i$ can not be calculated, e.g., if $\mathbb{X}$ is unbounded, then $\underline{m}_i$ and $\overline{m}_i$ correspond to the usual ``big-M" constants and have to be determined endogenously.
\end{remark}

\begin{remark}
Lemma~\ref{lem:NN} extends to NNs with general piecewise linear activation functions such as leaky ReLUs.
Note also that the convex hull of $\gr(\psi)$ is a strict subset of the polyhedron obtained by relaxing $\beta_i \in [0,1]$ in \eqref{eq:ffnn_milp}. The approximation quality can be improved by adding valid cuts~\cite{anderson2020}.
\end{remark}

\subsection{Box Saturation}

Control policies are often subject to physical constraints. To enforce these constraints, one may project the output of an approximate policy to the feasible set. For example, the projection $\psi$ onto the box $[\underline{x}, \overline{x}]\subseteq \mathbb{R}^m$ with $\underline{x} \leq \overline{x}$ can be viewed as a vector of element-wise projections of the form
\begin{equation} \label{eq:box_saturation}
\psi(x)_i =\left\{\begin{array}{l}
\underline{x}_i \text { if } u_i < \underline{x}_i, \\
u_i \text { if } \underline{x}_i \leq u_i \leq \overline{u}_i, \\
\overline{x}_i \text { if } u_i > \overline{x}.
\end{array}\right.
\end{equation}
\begin{lemma}
The projection $\psi$ onto the box $[\underline{x}, \overline{x}] \subseteq \mathbb{R}^m$ is MILP-representable.
\end{lemma}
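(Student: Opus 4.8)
The plan is to express the box projection $\psi$ as a composition of simpler MILP representable functions and then invoke Lemma~\ref{lem:MILP_comp}. The key observation is that the element-wise box saturation in \eqref{eq:box_saturation} can be written purely in terms of $\relu$ operations. Indeed, for each coordinate one has the identity
\begin{equation*}
\psi(x)_i = \overline{x}_i - \relu\bigl(\overline{x}_i - \underline{x}_i - \relu(x_i - \underline{x}_i)\bigr),
\end{equation*}
which realizes the lower clip via the inner $\relu$ and the upper clip via the outer $\relu$. Since Lemma~\ref{lem:NN} (together with the single-neuron representation \eqref{eq:relu_graph}) shows that the scalar map $t \mapsto \max(0,t)$ is MILP representable, and affine maps are trivially MILP representable (their graphs are already polyhedra with $c=b=0$), the whole coordinate map is a finite composition of MILP representable functions.

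First I would record that any affine function $x \mapsto Ax+b$ is MILP representable, since its graph $\{(x,u) : u = Ax+b\}$ is itself a polyhedron, so Definition~\ref{def:MILP_rep} is satisfied with no continuous or binary auxiliary variables. Second, I would note that the single-neuron $\relu$ is MILP representable by \eqref{eq:relu_graph}; applied componentwise, the vector $\relu$ on $\mathbb{R}^m$ is MILP representable because stacking independent MILP representations simply concatenates the polyhedral constraints and the auxiliary variables. Third, I would write the full projection as the chain
\begin{equation*}
\psi = a_3 \circ \relu \circ a_2 \circ \relu \circ a_1,
\end{equation*}
where $a_1(x)=x-\underline{x}$, $a_2(v)=(\overline{x}-\underline{x})-v$, and $a_3(w)=\overline{x}-w$ are affine, matching the nested identity above coordinatewise. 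Applying Lemma~\ref{lem:MILP_comp} repeatedly to this finite composition yields that $\psi$ is MILP representable.

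The only genuine technical point, and the place where I would be careful, concerns the domain/range bookkeeping required by Lemma~\ref{lem:MILP_comp}: each intermediate map must have range contained in the next map's domain, and the $\relu$ representation of Lemma~\ref{lem:NN} presupposes a compact polyhedral input set so that the bounds $\underline{m}_i, \overline{m}_i$ exist. I would therefore first assume (or note it suffices to consider) a bounded polyhedral input domain $\mathbb{X}$ for $\psi$; boundedness then propagates through the affine and $\relu$ maps, since affine images of bounded sets are bounded and $\relu$ is nonexpansive, guaranteeing that the requisite big-$M$ constants for every intermediate activation exist. With the compactness hypothesis in hand the composition is well-defined and the repeated application of Lemma~\ref{lem:MILP_comp} goes through without obstruction. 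Alternatively, and perhaps more cleanly, one could bypass the composition entirely and write down an explicit mixed-integer description of \eqref{eq:box_saturation} analogous to \eqref{eq:relu_graph}, using two binary variables per coordinate to encode the three active branches; I would keep this as a fallback but expect the compositional argument to be the shortest route.
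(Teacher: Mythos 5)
Your proof is correct and takes essentially the same route as the paper: your identity $\psi(x)_i = \overline{x}_i - \relu(\overline{x}_i - \underline{x}_i - \relu(x_i - \underline{x}_i))$ is algebraically identical to the paper's $\psi(x) = \overline{x} - \relu(\overline{x} - (\relu(x - \underline{x}) + \underline{x}))$, after which the paper simply recognizes this as a two-hidden-layer ReLU NN and cites Lemma~\ref{lem:NN}, whereas you reach the same conclusion by chaining Lemma~\ref{lem:MILP_comp} through the affine/ReLU factors. Your extra attention to the compactness of the domain needed for the big-$M$ bounds is a point the paper's proof leaves implicit, but it does not change the argument.
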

\begin{proof}
The projection $\psi$ defined through \eqref{eq:box_saturation} can be rewritten in the following form using the ReLU function:
\begin{equation*}
    \psi(x) = \overline{x} - \relu(\overline{x} - (\relu(x - \underline{x}) + \underline{x})).
\end{equation*}
Hence, $\psi$ can be recognized as a special instance of a ReLU NN, which is MILP-representable according to Lemma~\ref{lem:NN}.
\end{proof}

\subsection{Parametric QPs} \label{subsec:parametric_qps}

We now show that the optimal solution mappings of parametric QPs are MILP-representable. Specifically, we define $\psi(x)$ as the unique minimizer of the parametric QP
\begin{equation} \label{eq:pQP}
\begin{aligned}
\min_{z} \quad & \frac{1}{2} z^{T} P z + (Qx + q)^Tz \\
\text {s.t.}\, \quad & Az=Bx+b \\
& Fz \leq Gx + g,
\end{aligned}
\end{equation}
with internal decision variable $z \in \mathbb{R}^{n_z}$ and matrices $P \in \mathbb{S}^{n_z}_{++}$, $Q \in \mathbb{R}^{n_z \times n}$, $q \in \mathbb{R}^{n_z}$, $A \in \mathbb{R}^{n_\text{eq} \times n_z}$, $B \in \mathbb{R}^{n_\text{eq} \times n}$, $b \in \mathbb{R}^{n_\text{eq}}$, $F \in \mathbb{R}^{n_\text{ineq} \times n_z}$, $G \in \mathbb{R}^{n_\text{ineq} \times n}$, and $g \in \mathbb{R}^{n_\text{ineq}}$.

\begin{lemma} \label{lem:qp_MILP}
If the parametric QP \eqref{eq:pQP} is feasible for every $x\in\mathbb{X}$, then its optimal solution mapping $\psi(x)$ is MILP-representable.
\end{lemma}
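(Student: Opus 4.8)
The plan is to characterize the minimizer $\psi(x)$ through the Karush--Kuhn--Tucker (KKT) conditions and then to linearize the resulting complementarity constraints by means of binary variables, in exactly the spirit of the ReLU encoding in Lemma~\ref{lem:NN}. Since $P \in \mathbb{S}^{n_z}_{++}$, the objective of \eqref{eq:pQP} is strictly convex, so for every $x \in \mathbb{X}$ the minimizer exists and is unique, and the KKT conditions are both necessary and sufficient for global optimality. Introducing a multiplier $\nu \in \mathbb{R}^{n_\text{eq}}$ for the equality constraints and $\lambda \in \mathbb{R}^{n_\text{ineq}}$ for the inequality constraints, a feasible point $z$ is optimal if and only if there exist $\nu$ and $\lambda$ satisfying
\begin{equation} \label{eq:kkt}
\begin{aligned}
& Pz + Qx + q + A^\top \nu + F^\top \lambda = 0, \\
& Az = Bx + b, \quad Fz \leq Gx + g, \quad \lambda \geq 0, \\
& \lambda_j \, (Gx + g - Fz)_j = 0, \quad j = 1,\dots,n_\text{ineq}.
\end{aligned}
\end{equation}

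First I would observe that every condition in \eqref{eq:kkt} except the last is already a system of linear (in)equalities jointly in $(x, z, \nu, \lambda)$, hence polyhedral. The only obstruction to MILP representability is therefore the bilinear complementarity relation. I would encode each such relation with a binary variable $\beta_j \in \{0,1\}$ and a pair of big-M inequalities of the form
\begin{equation} \label{eq:compl}
0 \leq \lambda_j \leq \overline{M}_j \beta_j, \qquad 0 \leq (Gx + g - Fz)_j \leq \underline{M}_j (1 - \beta_j),
\end{equation}
which force $\lambda_j = 0$ when $\beta_j = 0$ and force the $j$-th slack to vanish when $\beta_j = 1$, thereby reproducing the complementarity condition exactly. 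Collecting the continuous auxiliary variables $(\nu, \lambda)$ and the binary vector $\beta \in \{0,1\}^{n_\text{ineq}}$, the constraints \eqref{eq:kkt}--\eqref{eq:compl} define a polyhedron $P$ in the space of $(x, z, \nu, \lambda, \beta)$, and $(x,z) \in \gr(\psi)$ holds if and only if there exist $(\nu,\lambda,\beta)$ with $(x,z,\nu,\lambda,\beta) \in P$ and $\beta$ binary. Identifying the output $u$ with the minimizer $z$, this is precisely the structure required by Definition~\ref{def:MILP_rep}.

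The step I expect to be the main obstacle is establishing that finite big-M constants $\overline{M}_j$ and $\underline{M}_j$ exist. I would argue that, because $\mathbb{X}$ is compact and the solution map of a strictly convex feasible QP is continuous, the primal minimizer $z$ ranges over a compact set, so the slacks $Gx + g - Fz$ admit explicit bounds $\underline{M}_j$. Bounding the multipliers $\lambda$ is the delicate part, and I anticipate invoking a constraint qualification (for instance, linear independence of the active constraint gradients, or a Slater condition uniform over $\mathbb{X}$) to guarantee that the optimal dual variables stay bounded, yielding the constants $\overline{M}_j$. As the remark following Lemma~\ref{lem:NN} already points out, when such a priori bounds are unavailable these constants play the role of the usual big-M parameters and can be determined endogenously by the solver; the essential conclusion---that $\gr(\psi)$ is cut out by finitely many linear constraints together with binary variables---is unaffected either way.
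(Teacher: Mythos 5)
Your proof follows essentially the same route as the paper's: since $P \in \mathbb{S}^{n_z}_{++}$ makes the objective strictly convex and the feasible set is polyhedral, the minimizer is uniquely characterized by the necessary and sufficient KKT conditions, whose only nonlinearity --- complementarity --- is then linearized with binary variables and big-M constraints, yielding a polyhedral lift of $\gr(\psi)$ exactly as in Definition~\ref{def:MILP_rep}. Your closing discussion of when finite big-M constants actually exist is, if anything, more careful than the paper, which simply invokes ``a suitable big-M constant'' and relegates the computation of tight bounds to auxiliary LPs in an appendix.
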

\begin{proof}
Since the feasible set of problem \eqref{eq:pQP} is a polyhedron, it admits a Slater point whenever it is non-empty. Since the objective function of \eqref{eq:pQP} is also strictly convex, $z^\star(x)$ for $x\in\mathbb{X}$ is uniquely determined by the necessary and sufficient Karush-Kuhn-Tucker (KKT) conditions \cite{bertsekas99}
\begin{align*}
    & \begin{array}{l}
    \text {Primal feasibility}\\
    \left\lfloor Az = Bx + b, \quad Fz \leq Gx + g, \right.
    \end{array} \\
    & \begin{array}{l}
    \text {Stationarity}\\
    \left\lfloor Pz + (Qx + q) + A^T\mu + F^T\lambda = 0, \right.
    \end{array} \\
    & \begin{array}{l}
    \text {Dual feasibility}\\
    \left\lfloor \lambda \geq 0, \right.
    \end{array} \\
    & \begin{array}{l}
    \text {Complementarity}\\
    \left\lfloor \lambda^T(Fz-Gx-g) = 0, \right.
    \end{array}
\end{align*}
where $\mu \in \mathbb{R}^{n_\text{eq}}$ and $\lambda \in \mathbb{R}^{n_\text{ineq}}$ are the Lagrange multipliers associated with the equality and inequality constraints, respectively. By introducing a binary decision variable $\beta_i$ that evaluates to $0$ if $F_iz < g_i$ and to $1$ if $\lambda_i > 0$ for each $i=1,\dots,n_\text{ineq}$, we can linearize the complementarity condition as
\begin{equation} \label{eq:KKT_complementarity_big_M}
\left.\begin{array}{l}
0 \leq \lambda_i \leq M\beta_i, \\
0 \leq g_i - F_iz \leq M(1-\beta_i)
\end{array}\right\} \forall i=1,\dots,n_\text{ineq},
\end{equation}
where $M$ is a suitable ``big-M" constant \cite{bemporad1999}. Therefore, the graph of $\psi$ can be represented as
\begin{equation*}
    \gr(\psi) = 
    \left\{
    (x,z) \;\middle|\; \begin{aligned}
    \exists \beta &\in \{0,1\}^{n_\text{ineq}}, \\
    \exists z &\in \mathbb{R}^{n_z}, \\
    \exists \lambda &\in \mathbb{R}^{n_\text{ineq}}, \\
    \exists \mu &\in \mathbb{R}^{n_\text{eq}}: \\
    x &\in \mathbb{X}, \\
    0 &= Az - Bx - b, \\
    0 &= Pz + (Qx + q) + A^T\mu \\
    & \quad + F^T\lambda, \\
    0 &\leq \lambda_i \leq M\beta_i~\forall i=1,\dots,n_\text{ineq}, \\
    0 &\leq g_i + (Gx)_i - F_iz \\
      &\leq M(1-\beta_i)~\forall i=1,\dots,n_\text{ineq}
    \end{aligned}
    \right\}.
\end{equation*}
Thus, the claim follows.
\end{proof}

\begin{remark}
Certain MILP solvers accept logical constraints and automatically transform them into ``big-M" constraints or special-ordered set constraints. In these cases, the MILP representations of logical constraints are optimally chosen by the solver, which leads to faster convergence. Tight ``big-M" bounds suitable for numerical purposes can also be calculated by solving auxiliary LPs. Details are relegated to Appendix~\ref{sec:bound_improvementes}.
\end{remark}

\subsection{Piecewise Affine Functions over Polyhedral Sets} \label{subsec:pwa}

Piecewise affine functions defined over polyhedral sets lend themselves for modeling piecewise affine hybrid dynamical systems \cite{marcucci2019}.
A continuous function $\psi: \mathbb{X} \rightarrow \mathbb{U}$ is called piecewise affine if there exist polyhedra $\mathbb{X}_i=\left\{x \in \mathbb{X} \;\middle|\; F_i x \leq g_i \right\}$, $i\in \mathcal{I}$, with $\mathbb{X}=\bigcup_{i\in\mathcal{I}} \mathbb{X}_i$ as well as matrices $A_i\in\mathbb{R}^{m \times n}$ and vectors $c_i \in \mathbb{R}^n$, $i\in \mathcal{I}$, such that
\begin{equation} \label{eq:pwa}
    \psi(x) = A_ix + c_i \quad \forall x\in\mathbb{X}_i, \; \forall i\in\mathcal{I},
\end{equation}
where $\mathcal{I}$ is a finite index set.

\begin{lemma}
If $\mathbb{X}$ is a compact polyhedron, then the piecewise affine function defined in \eqref{eq:pwa} is MILP-representable.
\end{lemma}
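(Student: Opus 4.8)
The plan is to encode the piecewise affine function via a standard ``big-M'' disjunctive formulation in which one binary variable selects the active polyhedral region. First I would introduce binary decision variables $\delta_i \in \{0,1\}$ for each $i\in\mathcal{I}$, together with the constraint $\sum_{i\in\mathcal{I}} \delta_i = 1$, so that exactly one region is declared active. Because $\mathbb{X}$ is a compact polyhedron and each $A_i x + c_i$ is continuous, there exist finite lower and upper bounds on the quantities $F_i x - g_i$ and on the candidate outputs $A_i x + c_i$ over $\mathbb{X}$; these bounds furnish the ``big-M'' constants. The core idea is that when $\delta_i = 1$ the corresponding region constraint $F_i x \leq g_i$ must hold and the output must equal $A_i x + c_i$, whereas when $\delta_i = 0$ the associated constraints are relaxed (made vacuous) by the big-M terms.

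Next I would write the region-selection constraints in the form $F_i x - g_i \leq M_i(1-\delta_i)\boldsymbol{1}$ for each $i$, so that the inequality $F_i x \leq g_i$ is enforced precisely when $\delta_i=1$ and is slackened otherwise. To couple the output $u$ to the active affine piece, I would introduce auxiliary continuous variables and bracket $u$ between $A_i x + c_i \pm M_i(1-\delta_i)\boldsymbol{1}$, i.e.\ impose $A_i x + c_i - M_i(1-\delta_i)\boldsymbol{1} \leq u \leq A_i x + c_i + M_i(1-\delta_i)\boldsymbol{1}$ for every $i$. When $\delta_i=1$ this squeezes $u = A_i x + c_i$, and when $\delta_i=0$ the bounds become inactive for a sufficiently large $M_i$. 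Collecting the membership constraint $x\in\mathbb{X}$, the simplex constraint on the $\delta_i$, the region inequalities, and the output-coupling inequalities yields a polyhedral set $P$ in the variables $(x,u,z,\beta)$ with $z$ empty (or absorbing any slacks) and $\beta=(\delta_i)_{i\in\mathcal{I}}$, matching Definition~\ref{def:MILP_rep}.

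The remaining step is to verify the equivalence $(x,u)\in\gr(\psi)$ iff the constructed mixed-integer system is feasible. For the forward direction, given $x\in\mathbb{X}_i$ one sets $\delta_i=1$, all other $\delta_j=0$, and $u=A_i x + c_i$, and checks all constraints hold. For the reverse direction, feasibility forces some $\delta_i=1$, hence $x\in\mathbb{X}_i$ and $u=A_i x + c_i=\psi(x)$ by \eqref{eq:pwa}; here I would lean on the fact that $\psi$ is well-defined and continuous, so that even when $x$ lies on the boundary of several regions the affine pieces agree and the representation is consistent. The main obstacle, and the only genuinely nontrivial point, is establishing existence of the finite big-M constants: this reduces to compactness of $\mathbb{X}$ together with continuity of each affine map, exactly the same mechanism invoked at the end of the proof of Lemma~\ref{lem:NN}, so the argument is essentially identical to the one already used there. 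Alternatively, one could observe that a continuous piecewise affine function admits a max-min (lattice) representation and is therefore expressible through compositions of ReLU operations, which are MILP representable by Lemma~\ref{lem:NN} and closed under composition by Lemma~\ref{lem:MILP_comp}; I expect the direct disjunctive construction to be the cleaner route to present.
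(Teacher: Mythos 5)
Your proof is correct, but it takes a genuinely different route from the paper. The paper invokes the convex-hull (perspective) formulation of \cite[Theorem~3.5]{marcucci2019}: it introduces disaggregated copies $x_i$ of the state, one per region, with constraints $F_i x_i \leq \beta_i g_i$, $\sum_i \beta_i = 1$, $x = \sum_i x_i$, and $u = \sum_i (A_i x_i + \beta_i c_i)$. No big-M constants appear; instead, compactness enters through a recession-cone argument: when $\beta_i = 0$, the constraint $F_i x_i \leq 0$ confines $x_i$ to the recession cone of the compact polyhedron $\mathbb{X}_i$, which is $\{0\}$, so the inactive copies vanish. Your big-M disjunctive formulation uses compactness differently---to guarantee finite bounds on $F_i x - g_i$ and on the spread of the candidate outputs $A_i x + c_i$ over $\mathbb{X}$---and your verification of both directions of the graph equivalence, including the boundary-consistency point where several regions overlap, is sound. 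The trade-off is the classical one: your formulation works on the original variables and is simpler to state, but its LP relaxation is weaker and its numerical behavior depends on the magnitude of the constants $M_i$; the paper's disaggregated formulation costs $|\mathcal{I}|$ extra copies of the state but yields an ideal (convex-hull) relaxation with no big-M constants to estimate, which matters for branch-and-bound performance---a concern the paper flags elsewhere (cf.\ the remark after Lemma~\ref{lem:NN} and Appendix~\ref{sec:bound_improvementes}). Your alternative suggestion via a max-min lattice representation composed of ReLU operations would also work through Lemmas~\ref{lem:NN} and~\ref{lem:MILP_comp}, but as you note, the direct disjunctive construction is the cleaner argument.
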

\begin{proof}
By \cite[Theorem~3.5]{marcucci2019}, \begin{diffadd}the piecewise affine function \eqref{eq:pwa} can be exactly represented using MILP constraints, i.e.,\end{diffadd} there exist binary variables $\beta_i\in\{0,1\}$, $i\in\mathcal{I}$, such that
\begin{subequations}
\begin{align}
    &F_ix_i \leq \beta_ig_i \quad \forall i \in \mathcal{I}, \label{eq:piecewise_a}\\
    &1 = \sum_{i\in\mathcal{I}} \beta_i, \; x = \sum_{i\in\mathcal{I}} x_i, \\
    &\psi(x) = \sum_{i\in\mathcal{I}} (A_ix_i+\beta_ic_i).
\end{align}
\end{subequations}
Note first that $\mathbb{X}_i \subseteq \mathbb{X}$ inherits compactness from $\mathbb{X}$. The constraint \eqref{eq:piecewise_a} implies that $x_i=0$ whenever $\beta_i=0$. Indeed, if $\beta_i=0$, then \eqref{eq:piecewise_a} forces $x_i$ to fall within the recession cone of the compact polyhedron $\mathbb{X}_i$, which coincides with the singleton $\{0\}$. If $\beta_i=1$, on the other hand, then \eqref{eq:piecewise_a} forces $x_i$ to fall within $\mathbb{X}_i$.
The graph of $\psi$ thus admits the following MILP representation.
\begin{equation} \label{eq:pwa_graph}
    \gr(\psi) = 
    \left\{
    (x,u) \;\middle|\; \begin{aligned}
    \exists \beta_i &\in \{0,1\} \quad \forall i \in \mathcal{I}, \\
    \exists x_i &\in \mathbb{R}^n \quad\quad\, \forall i \in \mathcal{I}: \\
    F_ix_i &\leq \beta_ig_i \quad\;\,\, \forall i \in \mathcal{I}, \\
    1 &= \sum_{i\in\mathcal{I}} \beta_i, \\
    x &= \sum_{i\in\mathcal{I}} x_i, \\
    u &= \sum_{i\in\mathcal{I}} (A_ix_i+\beta_ic_i)
    \end{aligned}
    \right\}
\end{equation}
Hence, the claim follows.
\end{proof}

\section{Computation of the Approximation Error} \label{sec:error_computation}

Consider now a verification problem of the form \eqref{eq:veri_ineq} with baseline policy $\psi_1(\cdot)$ and approximate policy $\psi_2(\cdot)$, and set the error function $f(\cdot)$ to the infinity norm $\|\cdot\|_\infty$. The worst-case error over a bounded polytopic input set $\mathbb{X}$ is given by
\begin{equation} \label{eq:abs_error_orig}
    \gamma = \max_{x \in \mathbb{X}} \|\psi_1(x) - \psi_2(x)\|_\infty.
\end{equation}
The infinity norm is a natural choice for measuring the mismatch between $\psi_1$ and $\psi_2$, and it is MILP-representable.

\begin{lemma} \label{lem:inf_norm_MILP}
The infinity norm $f(t)=\|t\|_\infty$ is MILP-representable on any bounded polytopic set $\mathbb{X} \subseteq \mathbb{R}^m$.
\end{lemma}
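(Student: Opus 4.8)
The plan is to exploit that the infinity norm is a convex piecewise affine function. Indeed, on $\mathbb{R}^m$ one has
\[
\|t\|_\infty = \max_{i=1,\dots,m} |t_i| = \max_{i=1,\dots,m} \max(t_i,\,-t_i),
\]
so $f$ is the pointwise maximum of the $2m$ linear functions $t\mapsto t_i$ and $t\mapsto -t_i$. My goal is then to show that the graph $\gr(f)=\{(t,s)\mid s=\|t\|_\infty,\ t\in\mathbb{X}\}$ is the projection onto $(t,s)$ of a polyhedron intersected with binary constraints, which is exactly the condition required by Definition~\ref{def:MILP_rep}.

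The cleanest route reuses the machinery already developed. Writing $|t_i| = \relu(t_i) + \relu(-t_i)$ and using the identity $\max(a,b) = b + \relu(a-b)$ to fold the outer maximum over $i$, the map $t\mapsto\|t\|_\infty$ becomes a finite composition of affine maps and ReLU activations, i.e.\ a ReLU network in the sense of Section~\ref{subsec:relu_nn}. Since $\mathbb{X}$ is a bounded polytope, hence a compact polyhedron, Lemma~\ref{lem:NN} (together with Lemma~\ref{lem:MILP_comp} for the composition) immediately yields an MILP representation of $\gr(f)$, with the layer-wise bounds $\underline{m}_i,\overline{m}_i$ guaranteed to exist by compactness of $\mathbb{X}$.

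Alternatively, I would give the representation directly via the standard big-M encoding of a maximum of affine terms. Exploiting boundedness, fix $\underline{x},\overline{x}$ with $\mathbb{X}\subseteq[\underline{x},\overline{x}]$ and a constant $M$ with $M \ge |t_i|$ for all $t\in\mathbb{X}$ and all $i$. Introducing binaries $\beta^+,\beta^-\in\{0,1\}^m$ that select the active term, the constraints $s\ge t_i$, $s\ge -t_i$, $s\le t_i + M(1-\beta_i^+)$, $s\le -t_i + M(1-\beta_i^-)$, and $\sum_{i}(\beta_i^++\beta_i^-)=1$ define the required polyhedron: the lower inequalities force $s\ge\|t\|_\infty$, while the select-one constraint together with the upper inequalities force $s$ to equal the single selected candidate, which the lower bounds in turn force to be the maximal one.

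The only genuine content, and the point where the hypothesis is used, is the existence of the big-M constant (equivalently, the layer bounds $\underline{m}_i,\overline{m}_i$), which is exactly what boundedness of $\mathbb{X}$ supplies; for unbounded $\mathbb{X}$ no finite $M$ need exist. The remaining work is the routine verification that feasibility of the above system is equivalent to $s=\|t\|_\infty$: the forward direction picks a maximizing index and activates the corresponding binary, and the reverse direction reads off $s=c_{j^\star}=\max_j c_j$ from the selected term. I expect no deeper obstacle; the care required is only in choosing $M$ large enough that the nonselected upper constraints remain slack over all of $\mathbb{X}$.
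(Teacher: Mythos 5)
Your proposal is correct, and your primary route is genuinely different from the paper's. The paper gives a self-contained two-stage construction: it first forces auxiliary variables $z_i = |t_i|$ using $m$ binaries and big-M bounds derived from a bounding box $[\underline{t},\overline{t}] \supseteq \mathbb{X}$, and then encodes $\tau = \max(z_1,\dots,z_m)$ with a second set of binaries $\tilde{\beta}_i$ satisfying $\sum_i \tilde{\beta}_i = 1$, together with $\tau = \sum_i \tilde{z}_i$, $0 \le \tilde{z}_i \le \max(|\underline{t}|,|\overline{t}|)\tilde{\beta}_i$, $\tilde{z}_i \le z_i$, and $\tau \ge z_i$. Your first route instead observes that $\|t\|_\infty$ is itself a ReLU network, via $|t_i| = \relu(t_i)+\relu(-t_i)$ and $\max(a,b) = b + \relu(a-b)$, and then invokes Lemma~\ref{lem:NN} together with Lemma~\ref{lem:MILP_comp} on the compact polyhedron $\mathbb{X}$. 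This is more economical---it reuses machinery the paper has already proven rather than redoing a big-M argument---at the cost of a less explicit formulation (folding the outer maximum costs extra layers and binaries). Your second route is essentially the paper's construction, collapsed into a single select-one scheme over the $2m$ affine terms $\{t_i,\,-t_i\}$.

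One concrete slip in that second route: the constant $M \ge |t_i|$ for all $t \in \mathbb{X}$ and all $i$ is too small. Take $m=1$, $\mathbb{X}=[-M,M]$, and $t_1=-M$, so $s=\|t\|_\infty=M$ and you must set $\beta_1^-=1$, $\beta_1^+=0$; the nonselected constraint $s \le t_1 + M(1-\beta_1^+) = 0$ then fails, so the point $(t,s)=(-M,M)$ is not in the projection and the representation misses part of $\gr(f)$. What you actually need is that every deactivated upper bound is slack over all of $\mathbb{X}$, i.e.\ $M \ge \sup_{t \in \mathbb{X}}\bigl(\|t\|_\infty - t_i\bigr)$ and $M \ge \sup_{t \in \mathbb{X}}\bigl(\|t\|_\infty + t_i\bigr)$, which requires roughly twice your constant; note that the paper's own constraints carry exactly this factor of two, e.g.\ $t_i \le z_i \le t_i + 2\overline{t}\beta_i$. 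You flag this care verbally in your closing sentence, but the explicit $M$ you wrote down does not satisfy it. The fix is immediate, and your first route is unaffected by this issue.
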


Instead of the infinity norm, one could also use the 1-norm to quantify the worst-case error in \eqref{eq:abs_error_orig}.
\begin{corollary} \label{cor:1_norm}
The 1-norm $f(t)=\|t\|_1$ is MILP-repre\-sen\-table on any bounded polytopic set $\mathbb{X} \subseteq \mathbb{R}^m$.
\end{corollary}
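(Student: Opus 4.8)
The plan is to reduce the claim to the already-established MILP representability of ReLU networks (Lemma~\ref{lem:NN}), mirroring the structure of the proof of Lemma~\ref{lem:inf_norm_MILP}. The key observation is that the 1-norm decomposes componentwise as $\|t\|_1 = \sum_{i=1}^m |t_i|$, and that each absolute value is expressible through ReLU units via $|t_i| = \relu(t_i) + \relu(-t_i)$. Stacking these identities gives $\|t\|_1 = \boldsymbol{1}_{2m}^T\,\relu\big(Wt\big)$ with the stacked weight matrix $W = [\,I_m \;\; {-I_m}\,]^T \in \mathbb{R}^{2m \times m}$ and zero bias: the hidden layer produces the $2m$ activations $\relu(t_i)$ and $\relu(-t_i)$, and the affine output layer sums them via the weight vector $\boldsymbol{1}_{2m}^T$. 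This is precisely a one-hidden-layer ReLU network.

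First I would record that, since $\mathbb{X} \subseteq \mathbb{R}^m$ is bounded, there exist $\underline{x}, \overline{x}$ with $\mathbb{X} \subseteq [\underline{x}, \overline{x}]$, so the pre-activation bounds $\underline{m}$, $\overline{m}$ required by the ReLU representation of Lemma~\ref{lem:NN} exist by the same compactness argument used there. Then I would invoke Lemma~\ref{lem:NN} directly on the network constructed above to conclude that $f(t) = \|t\|_1$ is MILP representable over $\mathbb{X}$. Alternatively, and more in the spirit of the infinity-norm argument, I would note that each $|t_i| = \max(t_i, -t_i)$ is a pointwise maximum of two affine functions and hence piecewise affine, so the sum $\sum_i |t_i|$ is piecewise affine as well, and the conclusion follows from the piecewise-affine result; either route avoids any genuine difficulty.

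The only point requiring care — the nearest thing to an obstacle — is bookkeeping rather than mathematics: one must confirm that taking a \emph{sum} of the individual MILP representable maps $|t_i|$ preserves MILP representability. This is a routine variant of the composition argument in Lemma~\ref{lem:MILP_comp}: the separate binary and continuous auxiliary variables are concatenated, the shared input $t$ is linked across the pieces by equality constraints, and the outputs are combined through a single linear equation, all of which keeps the defining set polyhedral. Since the ReLU-network reformulation already packages this bookkeeping inside Lemma~\ref{lem:NN}, I would prefer that route for brevity.
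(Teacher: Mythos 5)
Your proof is correct, but it takes a genuinely different route from the paper's. The paper's own proof simply reuses the componentwise absolute-value gadget from the proof of Lemma~\ref{lem:inf_norm_MILP}: the binary variables $\beta_i$ and constraints \eqref{eq:norm_const} already force $z_i = |t_i|$, and since the 1-norm is a sum rather than a maximum, the paper drops the auxiliary variables $\tilde{\beta}$ and $\tilde{z}$ and sets $\tau = \sum_{i=1}^m z_i$ directly. You instead reduce to Lemma~\ref{lem:NN} via the identity $\|t\|_1 = \boldsymbol{1}_{2m}^T\,\relu(Wt)$, with $W$ obtained by stacking $I_m$ on top of $-I_m$, i.e., you exhibit the 1-norm as a one-hidden-layer ReLU network with an affine output layer, which is exactly the form covered by the graph representation \eqref{eq:ffnn_milp}; the summation concern you flag is thereby absorbed into that affine output layer, so no separate composition argument is needed (and even under the paper's all-ReLU layer definition nothing breaks, since $\|t\|_1 \geq 0$ makes a final ReLU vacuous). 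Both arguments are sound, and your observation that boundedness of $\mathbb{X}$ supplies the pre-activation bounds required by Lemma~\ref{lem:NN} is the right hypothesis check. What each buys: your reduction is shorter and more modular, writing no explicit constraints; the paper's construction yields a leaner encoding --- one binary per component, versus the two binaries per component (one for each of the neurons $\relu(t_i)$ and $\relu(-t_i)$) that Lemma~\ref{lem:NN} introduces --- and keeps the 1-norm and infinity-norm proofs visibly parallel. Your alternative piecewise-affine route through \eqref{eq:pwa_graph} is also logically valid, since the orthants induce a finite polyhedral partition on which $\|\cdot\|_1$ is affine, but it is by far the least economical option: it requires one binary variable and one copy of the input per piece, and the number of pieces grows as $2^m$.
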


The proofs of Lemma~\ref{lem:inf_norm_MILP} and Corollary~\ref{cor:1_norm} are standard and therefore relegated to Appendices~\ref{subsec:proof_lem:inf_norm_MILP} and~\ref{subsec:proof_cor:1_norm}, respectively. Lemma~\ref{lem:inf_norm_MILP} implies via Proposition~\ref{lem:veri_MILP_formulation} that if $\psi_1(x)$ and $\psi_2(x)$ are MILP-representable, then the worst-case approximation error \eqref{eq:abs_error_orig} can be computed by solving an MILP.

\begin{remark}
By solving $2m$ MILPs one can also construct a tight bounding box $[\underline{\gamma}, \overline{\gamma}]\subseteq\mathbb{R}^m$ that covers the approximation error $\psi_1(x) - \psi_2(x)$ for every $x\in\mathbb{X}$. Indeed, to compute the components of the vector $\underline{\gamma}$, we solve $m$ MILPs of the form
\begin{equation} \label{eq:error_box}
\begin{aligned}
\underline{\gamma}_i = \min_{x,u_1,u_2} \quad & e_i^T(u_1 - u_2) \\
\text{s.t.} ~~\quad & x \in \mathbb{X} \\
& (x,u_1) \in \gr(\psi_1) \\
& (x,u_2) \in \gr(\psi_2), \\
\end{aligned}
\end{equation}
where $e_i\in \mathbb{R}^{m}$ is the $i$-th vector of the canonical basis. To compute the components of $\overline{\gamma}$, we simply convert the minimization operator in~\eqref{eq:error_box} to a maximization operator.
\end{remark}

\subsection{Stability Verification Against Robust MPC} \label{subsec:_verification_robust}

 \begin{diffadd}
 We can use the worst-case approximation error \eqref{eq:abs_error_orig} to verify an approximate MPC scheme against a robustly stable MPC policy. As an example, consider the linear dynamical system
\begin{equation}\label{eq:linear_system}
    x^+ = Ax+Bu
\end{equation}
with state $x \in \mathbb{R}^n$, input $u \in \mathbb{R}^m$ and system matrices $A\in\mathbb R^{n\times n}$ and $B\in\mathbb R^{n\times m}$. Let $\psi_1(x)$ be a control policy that is robust against additive input disturbances in the set $\mathbb{W} = \left\{ w \in \mathbb{R}^m \;\middle|\; \|w\|_\infty \leq \hat{\gamma} \right\}$.
Such a control policy can be obtained via the Tube MPC approach \cite{mayne2005}, for instance. Additionally, suppose that there exists a feedback gain matrix $K\in\mathbb{R}^{m\times n}$ such that $A_K = A + BK$ is asymptotically stable, and let $\mathcal{E}\subseteq \mathbb R^n$ be the disturbance invariant set for the controlled uncertain system $x^+=A_Kx + Bw$, which satisfies
\begin{equation*}
    A_K \mathcal{E} \oplus B \mathbb{W} \subseteq \mathcal{E}.
\end{equation*}
In the context of Tube MPC, $K$ corresponds to the solution of a discrete LQR problem, and $\mathcal{E}$ is the minimum robust invariant set with respect to the feedback gain matrix $K$, to which the Tube MPC policy~$\psi_1(\cdot)$ will converge in closed loop. In Section~\ref{subsec:example_robust_mpc} we will see that~$\psi_1(\cdot)$ is MILP-representable and that the invariant set $\mathcal{E}$ can be efficiently computed offline.

Let now $\tilde{\psi}_2(\cdot)$ be an approximation of $\psi_1(\cdot)$. For example, $\psi_2(\cdot)$ can be obtained by sampling the baseline policy $\psi_1(\cdot)$ at different points of the state space and learning a compatible input-output map by training a NN on the samples. Then, set
\begin{equation*}
    \psi_2(x) = \left\{\begin{array}{ll}
Kx & \text { if } x \in \mathcal{E}, \\
\tilde{\psi}_2(x) & \text { otherwise}.
\end{array}\right.
\end{equation*}
The following theorem provides a means to verify the closed-loop asymptotic stability of approximate policies like $\psi_2(\cdot)$.
\begin{theorem}
The control policy $\psi_2(\cdot)$ constructed from $\tilde \psi_2(\cdot)$, the feedback gain matrix $K$ and the invariant set $\mathcal E$ is asymptotically stable in closed loop on a polyhedron $\mathbb{X}$ if
\begin{equation} \label{eq:error_robust_MPC_verification}
\begin{aligned}
0 \leq \min_{\tau,x,u_1,u_2} \quad & \hat{\gamma} - \tau \\
\text{s.t.} \qquad & x \in \mathbb{X} \\
& (x,u_1) \in \gr(\psi_1) \\
& (x,u_2) \in \gr(\tilde{\psi}_2) \\
& (u_1 - u_2,\tau) \in \gr(\|\cdot\|_\infty),
\end{aligned}
\end{equation}
and if the robust control policy $\psi_1(\cdot)$ converges asymptotically to the set $\mathcal{E}$. The minimization problem in~\eqref{eq:error_robust_MPC_verification} is equivalent to a MILP if $\psi_1(x)$ and $\tilde{\psi}_2(x)$ are MILP-representable.

\end{theorem}\medskip
\begin{proof}
Let $\gamma$ be the worst case approximation error between $\psi_1(x)$ and $\tilde{\psi}_2(x)$ over the domain $\mathbb{X}$ as defined in~\eqref{eq:abs_error_orig}. Hence, $\tilde{\psi}_2(x)$ applied to the real system deviates from $\psi_1(x)$ at most by $\gamma$ with respect to the infinity norm. Since the original policy $\psi_1(x)$ is robust against input disturbances of magnitude up to $\hat{\gamma}$ with respect to the infinity norm, the inequality $\gamma \leq \hat{\gamma}$ is sufficient for $\tilde{\psi}_2(x)$ to be stabilizing, and the state $x$ converges asymptotically to the set $\mathcal{E}$. As soon as the set $\mathcal{E}$ is reached, the controller $\psi_2(x)$ switches from $\tilde\psi_2(x)$ to the linear control law $Kx$. Since $K$ asymptotically stabilizes the underlying system and $\mathcal{E}$ is an invariant set under the linear policy $Kx$, it follows that~$\psi_2(x)$ is asymptotically stable. We can verify the inequality $\gamma \leq \hat{\gamma}$ by solving \eqref{eq:error_robust_MPC_verification} where, at optimality, $\tau$ coincides with the worst case approximation error~$\gamma$. If $\psi_1(x)$ and $\tilde{\psi}_2(x)$ are MILP-representable, then \eqref{eq:error_robust_MPC_verification} is equivalent to an MILP because $\mathbb X$ is a polyhedron and because $\|\cdot\|_\infty$ is MILP-representable thanks to Lemma~\ref{lem:inf_norm_MILP}.
\end{proof}

\begin{remark}
    In practice, \eqref{eq:error_robust_MPC_verification} is solved with $\hat{\gamma}=0$, and the value of the decision variable~$\tau$ at optimality determines the worst-case error $\gamma$ between~$\psi_1(\cdot)$ and~$\tilde \psi_2(\cdot)$.
\end{remark}
\end{diffadd}

\section{Stability Verification of Approximate MPC} \label{sec:stability_verification}

In Section~\ref{subsec:_verification_robust} we verified the stability of an approximate policy against a robustly stable MPC policy by checking an inequality involving the worst-case approximation error. In contrast, in this section we show that one can also directly verify a Lyapunov decrease condition to show stability.  To this end, consider again the linear dynamical system~\eqref{eq:linear_system}, and assume that the state $x \in \mathbb{R}^n$ and the input $u \in \mathbb{R}^m$ are constrained to lie in the polytopic sets $\mathbb{X} \subseteq \mathbb{R}^n$ and $\mathbb{U} \subseteq \mathbb{R}^m$, respectively. Consider then the finite-horizon MPC problem
\begin{subequations} \label{eq:mpc_scheme}
\begin{align}
    J^{\star}(x)= \min_{\mathbf{x}, \mathbf{u}} ~ & J(\mathbf{x}, \mathbf{u}) \\
    \text {s.t.}\, ~ & x_{i+1}=A x_{i}+B u_{i} ~ \forall i=0, \ldots, N-1  \\
    & x_{i} \in \mathbb{X},\; u_{i} \in \mathbb{U} \quad~~ \forall i=0, \ldots, N-1 \label{eq:mpc_scheme_state_input_constraints} \\
    & x_{N} \in \mathbb{X}_{N} \\
    & x_{0}=x
\end{align}
\end{subequations}
with cost function
\begin{equation*}
    J(\mathbf{x}, \mathbf{u}) = \sum_{i=0}^{N-1} \ell(x_i,u_i) + V_N(x_N).
\end{equation*}

\begin{diffadd}
Throughout the paper we will assume that $\mathbb X$, $\mathbb U$ and $\mathbb X_N$ contain a neighborhood of the origin and that the stage cost function $\ell (x,u)$ as well as the terminal cost function $V_N(x)$ are positive definite. For ease of notation, we henceforth use
\begin{equation*}
    \mathcal{F}(x) = 
    \left\{
    (\mathbf{x}, \mathbf{u}) \;\middle|\; \begin{aligned}
    & x_{i+1}=A x_{i}+B u_{i}~\forall i=0, \ldots, N-1, \\
    & x_{i} \in \mathbb{X},\; u_{i} \in \mathbb{U}\quad ~\;\,\forall i=0, \ldots, N-1, \\
    & x_{N} \in \mathbb{X}_{N},\; x_0 = x
    \end{aligned}
    \right\}
\end{equation*}
as a shorthand for the feasible set of problem \eqref{eq:mpc_scheme}.

\begin{assumption} \label{as:terminal_set}
There exists a linear control law $\psi(x)=Kx$ such that the terminal set $\mathbb{X}_{N} \subseteq \mathbb{X}$ is a polytopic invariant set for the system $x^+=Ax+BKx$ with state and input constraints \eqref{eq:mpc_scheme_state_input_constraints} \cite{borrelli2017}, and $V_N(x)$ is a Lyapunov function for the same system that decreases by one stage cost each time step, that is, $V_N(x^+)-V_N(x) \leq \ell(x,Kx)$ for every $x\in\mathbb{X}_N$.
\end{assumption}

Assumption~\ref{as:terminal_set} is satisfied if the stage cost is quadratic, i.e.,
\begin{equation*}
    \ell(x_i,u_i) = x_i^T Q x_i + u_i^T R u_i
\end{equation*}
with $Q \in \mathbb{S}^n_{++}$ and $R \in \mathbb{S}^m_{++}$, in which case the optimal value function of the LQR problem corresponding to~\eqref{eq:mpc_scheme} serves as the desired Lyapunov function $V_N(x)$ and the optimal LQR controller $\psi(x)=Kx$ serves as the desired linear control law. Then, the control invariant set $\mathbb{X}_N$ can be calculated using the MPT toolbox \cite{herceg2013}, and the decrease condition for $V_N(x)$ is fulfilled. To see this, consider the LQR problem
\begin{equation} \label{eq:LQR_prob}
\begin{aligned}
    V_N(x) = %
    \min_{\mathbf{x}, \mathbf{u}} \quad & \sum_{i=0}^{\infty} \ell(x_i,u_i) \\
    \text { s.t.} \quad & x_{i+1}=A x_{i}+B u_{i}~ \forall i=0, \ldots, N-1 \\
    & x_0 = x
\end{aligned}
\end{equation}
corresponding to~\eqref{eq:mpc_scheme}. 
Clearly, if $\psi(x)=Kx$ is an optimal policy for the infinite-horizon problem~\eqref{eq:LQR_prob}, then $V_N(x^+)=V_N(Ax+BKx)=V_N(x)-\ell(x,Kx)$. In addition, it is well known that $V_N(x)$ is a Lyapunov function for the linear system $x^+=Ax+BKx$; see, e.g., \cite{mayne2000}.

Consider now the optimal MPC policy $\psi_1(x)$ that maps any state~$x$ to an optimal input~$u_0$ of problem~\eqref{eq:mpc_scheme}, and consider an MILP-representable approximate MPC controller $\psi_2(x)$ of $\psi_1(x)$ that is cheap to evaluate (e.g., a NN).
In the following, we construct bilevel programs for the verification of the stability of the approximate MPC controller $\psi_2(x)$. In Section~\ref{subsec:MIP_formulation}, we then show that these bilevel programs are equivalent to MIQPs that are amenable to numerical solution.

\subsection{Sufficient Condition for Lyapunov Decrease} \label{subsec:sufficient}

The next lemma establishes a sufficient condition for the approximate control law $\psi_2(x)$ to be stabilizing for a given initial state set. The condition is inspired by \cite{jones2009} but has been extended to allow for the verification of asymptotic stability.

\begin{lemma} \label{thm:sufficient_stability}
If Assumption~\ref{as:terminal_set} holds, $\psi_2(x)$ is a control law defined over a neighborhood $\mathbb{X}_0\subseteq \mathbb X$ of~$0$ and $\epsilon > 0$, then the optimal value function $J^\star(x)$ of the MPC problem~\eqref{eq:mpc_scheme} is a Lyapunov function for the system $x^+=Ax+B\psi_2(x)$ on~$\mathbb X_0$
provided that for every $x_0 \in \mathbb{X}_0$ there exists~$(\mathbf{x}, \mathbf{u}) \in \mathcal{F}(x_0)$ with $u_0=\psi_2(x_0)$ that satisfies the condition
\begin{equation} \label{eq:sufficient_stability}
    J(\mathbf{x}, \mathbf{u}) - J^\star(x_0) \leq \ell(x_0, u_0) - \epsilon\|x_0\|_2^2. %
\end{equation}
\end{lemma}

\begin{proof}
Choose any $x_0\in \mathbb{X}_0$. By assumption, there exists a sequence $(x_0,\dots,x_N,u_0,\dots,u_{N-1})$ with $u_0=\psi_2(x_0)$ that is feasible in~\eqref{eq:mpc_scheme} for $x=x_0$ and satisfies condition~\eqref{eq:sufficient_stability}. Hence, the shifted sequence $(x_1,\dots,x_{N+1}, u_1,\dots,u_N)$ with $u_N=Kx_N$ and $x_{N+1}=Ax_N+BKx_N$ is feasible in~\eqref{eq:mpc_scheme} for $x=x_1$ thanks to Assumption~\ref{as:terminal_set}, which asserts that $\mathbb{X}_N$ is an invariant set for the system $x^+=Ax+BKx$. Evaluating the cost of this shifted sequence in problem~\eqref{eq:mpc_scheme} then yields
\begin{align*}
    J^{\star}\left(x_{1}\right) & \leq \sum_{i=1}^{N} \ell\left(x_{i}, u_{i}\right) + V_{N}\left(x_{N+1}\right)\\
    &= \sum_{i=0}^{N-1} \ell\left(x_{i}, u_{i}\right) -\ell(x_0,u_0)+V_{N}\left(x_{N}\right)\\ 
    &\quad + V_{N}\left(x_{N+1}\right)-V_{N}\left(x_{N}\right)+\ell\left(x_{N}, u_{N}\right) \\ 
    & \leq J^{\star}(x_0) - \epsilon\|x_0\|_2^2.
\end{align*}
Here, the second inequality holds again because of Assumption~\ref{as:terminal_set}, which requires that $V_N(x^+)-V_N(x) \leq \ell(x,Kx)$ for every $x\in\mathbb{X}_N$, and because of \eqref{eq:sufficient_stability}. As $x_0\in \mathbb{X}_0$ was chosen arbitrarily and as $\varepsilon>0$, the optimal value function $J^\star(x)$ thus constitutes indeed a Lyapunov function for the approximate closed-loop system $x^+=Ax+B\psi_2(x)$ on the set~$\mathbb{X}_0$.
\end{proof}

Condition~\eqref{eq:sufficient_stability} is not easy to verify but is equivalent to the requirement that for every initial state $x_0 \in \mathbb{X}_0$ there exists a sequence $(\mathbf{x}, \mathbf{u}) \in \mathcal{F}(x_0)$ with $u_0=\psi_2(x_0)$ and
\begin{equation*}
    0 \leq \ell(x_0,u_0) - \epsilon \|x_0\|_2^2 - J(\mathbf{x}, \mathbf{u}) + J^\star(x_0).
\end{equation*}
This condition is satisfied if and only if
\begin{equation*}
    \inf_{x_0\in\mathbb{X}_0} \sup_{\substack{(\mathbf{x}, \mathbf{u}) \in \mathcal{F}(x_0) \\ u_0 = \psi_2(x_0)}} \ell(x_0,u_0) - \epsilon \|x_0\|_2^2 -J(\mathbf{x}, \mathbf{u}) + J^\star(x_0)
\end{equation*}
is non-negative. This is only possible if for all $x_0\in\mathbb X_0$ there exists $(\mathbf{x}, \mathbf{u}) \in \mathcal{F}(x_0)$ with $u_0=\psi_2(x_0)$. Otherwise, the minimization player in the above zero-sum game can force the value of the game to~$-\infty$ by selecting an~$x_0$ that makes the inner maximization problem infeasible, in which case condition~\eqref{eq:sufficient_stability} fails to hold. From now on, we may thus assume without loss of generality that for all $x_0\in\mathbb X_0$ there exists $(\mathbf{x}, \mathbf{u}) \in \mathcal{F}(x_0)$ with $u_0=\psi_2(x_0)$. Hence, both the infimum and the supremum in the above min-max problem are attained. By recalling that $J^\star(x_0) = \min_{(\tilde{\mathbf{x}},\tilde {\mathbf{u}}) \in \mathcal{F}(x_0)} J(\mathbf{x}, \mathbf{u})$, the above min-max problem can then be recast as
\begin{equation} \label{eq:min_max}
\begin{aligned}
    &\min_{\substack{x_0\in\mathbb{X}_0 \\ (\tilde{\mathbf{x}}, \tilde{\mathbf{u}}) \in \mathcal{F}(x_0)}} J(\tilde{\mathbf{x}}, \tilde{\mathbf{u}}) \\
    & \qquad + \max_{\substack{(\mathbf{x}, \mathbf{u}) \in \mathcal{F}(x_0) \\ u_0 = \psi_2(x_0)}} \ell(x_0,u_0) - J(\mathbf{x}, \mathbf{u}) - \epsilon \|x_0\|_2^2.
\end{aligned}
\end{equation}
The standard approach in robust optimization to simplify the min-max problem~\eqref{eq:min_max} would be to dualize the inner maximization problem. Unfortunately, this leads to a minimization problem with a bilinear term in the objective function, which can not be easily linearized. The resulting minimization problem is almost impossible to solve with a branch and bound algorithm. Instead, we reformulate \eqref{eq:min_max} as the bilevel program
\begin{equation}
\label{eq:stability_opt_org}
\begin{aligned}
\min &~~ J(\tilde{\mathbf{x}}, \tilde{\mathbf{u}}) + \ell(x_0,u_0) - J(\mathbf{x}, \mathbf{u}) - \epsilon \|x_0\|_2^2 \\
\text{s.t.} \; &~~  x_0\in\mathbb{X}_0, ~ (\tilde{\mathbf{x}}, \tilde{\mathbf{u}}) \in \mathcal{F}(x_0) \\
& ~~ (\mathbf{x}, \mathbf{u}) \in \left\{ \!\!\begin{array}{r@{\;}l}
\arg\min & J(\bar{\mathbf{x}}, \bar{\mathbf{u}}) \\
\text{s.t.}\, & (\bar{\mathbf{x}}, \bar{\mathbf{u}}) \in \mathcal{F}(x_0)\\ & \bar u_0 = \psi_2(x_0).
\end{array}\right.
\end{aligned}
\end{equation}
By construction, condition~\eqref{eq:sufficient_stability} is thus fulfilled if and only if the optimal value of the bilevel program~\eqref{eq:stability_opt_org} is non-negative. However, solving bilevel programs of the form~\eqref{eq:stability_opt_org} is still challenging. In Section~\ref{subsec:MIP_formulation} we thus propose a reformulation of~\eqref{eq:stability_opt_org} as an MIQP. Compared to dualizing the min-max problem \eqref{eq:min_max}, we found that the mixed-integer reformulations described in Section~\ref{subsec:MIP_formulation} are more efficiently solvable in practice. We believe that this is the case because the root relaxation is bounded due to only having bounded primal variables in the objective. Dualizing problem \eqref{eq:min_max} introduces dual variables in the objective, which are generally unbounded.

Even if $J^\star(x)$ is a Lyapunov function on the set $\mathbb{X}_0$, the system may still fail to be stable if it is not invariant on $\mathbb{X}_0$. Next, we thus establish conditions that guarantee stability.
\begin{theorem} \label{thm:set_stability} Let $\mathbb{X}_0\subseteq \mathbb X$ be a neighborhood of~$0$ such that for all $x_0\in\mathbb X_0$ there exists $(\mathbf{x}, \mathbf{u}) \in \mathcal{F}(x_0)$ with $u_0=\psi_2(x_0)$. If Assumption~\ref{as:terminal_set} holds, $\psi_2(x)$ is a continuous control law defined over $\mathbb X_0$ with $\psi_2(0)=0$ and the optimal value of~\eqref{eq:stability_opt_org} is non-negative for some $\epsilon>0$, then there exists a neighborhood $\mathbb{X}_0^{\psi_2}\subseteq \mathbb X_0$ of~$0$ such that the closed-loop system $x^+=Ax+B\psi_2(x)$ converges asymptotically to 0 on $\mathbb{X}_0^{\psi_2}$.
\end{theorem}
\begin{proof}
As the optimal value of~\eqref{eq:stability_opt_org} is non-negative, condition~\eqref{eq:sufficient_stability} is satisfied. Hence, all assumptions of Lemma~\ref{thm:sufficient_stability} hold, implying that~$J^\star(x)$ is a Lyapunov function for the system $x^+=Ax+B\psi_2(x)$ on~$\mathbb{X}_0$. Next, recall that the sets $\mathbb{X}$, $\mathbb{U}$, $\mathbb{X}_N$ and $\mathbb{X}_0$ all contain a neighborhood of the origin, $\psi_2(x)$ is continuous and~$\psi_2(0)=0$. %
In addition, as $\ell(x,u)$ and $V_N(x)$ are positive definite, one can show that $J^\star(x)$ is also positive definite and that it has sublevel sets that are arbitrarily small neighborhoods of~$0$. We may thus define~$\mathbb X_0^{\psi_2}$ as a sublevel set of~$J^\star(x)$ that is both a neighborhood of~$0$ and a subset of~$\mathbb{X}_0$. By construction, $\mathbb X_0^{\psi_2}$ is an invariant set, and the closed-loop system $x^+=Ax+B\psi_2(x)$ converges to 0 on $\mathbb X_0^{\psi_2}$.
\end{proof}

\subsection{Direct Verification of Lyapunov Decrease} \label{subsec:direct}

Instead of verifying the sufficient condition \eqref{eq:sufficient_stability}, we can also directly verify the Lyapunov decrease condition
\begin{equation} \label{eq:direct_stability}
    J^\star(x^+)-J^\star(x) \leq - \epsilon\|x\|_2^2 \quad \forall x\in\mathbb{X}_0
\end{equation}
with $x^+=Ax+B\psi_2(x)$. Note that~\eqref{eq:direct_stability} makes only sense if both $J^\star(x)$ and $J^\star(x^+)$ are finite for all $x\in\mathbb X_0$. From now on, we may thus assume without much loss of generality that for all $x_0\in\mathbb{X}_0$ there exist $(\mathbf{x}, \mathbf{u}) \in \mathcal{F}(x_0^+)$ and $(\tilde{\mathbf{x}}, \tilde{\mathbf{u}}) \in \mathcal{F}(x_0)$. In analogy to Section~\ref{subsec:sufficient}, we can then reformulate the direct condition \eqref{eq:direct_stability} in terms of the bilevel program
\begin{equation}
\label{eq:stability_direct_opt_org}
\begin{aligned}
\min  &~~ J(\tilde{\mathbf{x}}, \tilde{\mathbf{u}})- J(\mathbf{x}, \mathbf{u}) - \epsilon \|x_0\|^2_2 \\
\text{s.t.} \; &~~  x_0\in\mathbb{X}_0,~  (\tilde{\mathbf{x}}, \tilde{\mathbf{u}}) \in \mathcal{F}(x_0)\\
& ~~ (\mathbf{x}, \mathbf{u}) \in \left\{ \!\! \begin{array}{r@{\;}l}
\arg \min & J(\bar{\mathbf{x}}, \bar{\mathbf{u}})\\
\text{s.t.}\, & (\bar{\mathbf{x}}, \bar{\mathbf{u}}) \in \mathcal F(\bar x_0) \\ & \bar x_0=Ax_0+B\psi_2(x_0).
\end{array}\right.
\end{aligned}
\end{equation}
Indeed, condition~\eqref{eq:direct_stability} holds if and only if the optimal value of the bilevel program~\eqref{eq:stability_direct_opt_org} is non-negative. Note that, by constructions, the lower level problem in~\eqref{eq:stability_direct_opt_org} is feasible for every~$x_0\in\mathbb X_0$. Note also that the bilevel programs~\eqref{eq:stability_opt_org} and~\eqref{eq:stability_direct_opt_org} are very similar, that is, they only differ with regard to the objective function of the upper level problem and the feasible set of the lower level problem.
\begin{corollary} \label{cor:direct}
Let $\mathbb{X}_0\subseteq \mathbb X$ be a neighborhood of~$0$ such that for all $x_0\in\mathbb{X}_0$ there exist $(\mathbf{x}, \mathbf{u}) \in \mathcal{F}(x_0^+)$ and $(\tilde{\mathbf{x}}, \tilde{\mathbf{u}}) \in \mathcal{F}(x_0)$. If Assumption~\ref{as:terminal_set} holds, $\psi_2(x)$ is a continuous control law on $\mathbb{X}_0$ with $\psi_2(0)=0$ and the optimal value of~\eqref{eq:stability_direct_opt_org} is non-negative for some $\epsilon>0$, then there exists a neighborhood $\mathbb X_0^{\psi_2} \subseteq \mathbb{X}_0$ of~$0$ such that the closed-loop system $x^+=Ax+B\psi_2(x)$ converges asymptotically to 0 on $\mathbb X_0^{\psi_2}$.
\end{corollary}

The proof of Corollary~\ref{cor:direct} widely parallels that of Theorem~\ref{thm:set_stability} and is therefore omitted for the sake of brevity.

\subsection{Approximating the Stable Region} \label{subsec:outer_approx}

From Theorem~\ref{thm:set_stability} we know that the approximate controller $\psi_2(x)$ is stable on a neighborhood~$\mathbb X_0^{\psi_2}$ of~0 provided that the optimal value of the bilevel program~\eqref{eq:stability_opt_org} is non-negative. Unfortunately, there is no explicit analytical or efficient algorithmic characterization of~$\mathbb X_0^{\psi_2}$ in general. We thus resort to computing outer and inner approximations. To construct a convex polyhedral outer approximation for~$\mathbb X_0^{\psi_2}$, recall first that~$\mathbb X_0^{\psi_2}\subseteq \mathbb X_0$ and that for all $x_0\in\mathbb X_0$ there exists $(\mathbf{x}, \mathbf{u}) \in \mathcal{F}(x_0)$ with $u_0=\psi_2(x_0)$. We can thus enclose~$\mathbb X_0^{\psi_2}$ in 
\[
    \Omega=\big\{x_0\in\mathbb X \; \big| \; \exists (\mathbf{x}, \mathbf{u}) \in \mathcal{F}(x_0),\; u_0=\psi_2(x_0)\big\},
\]
which in turn can be enclosed in the convex polyhedron
\begin{equation*}
    \overline{\Omega} = \left\{ x \in \mathbb{R}^{n} \;\middle|\; Cx \leq c \right\},
\end{equation*}
where~$C \in \mathbb{R}^{n_C \times n}$ is a fixed matrix, whose rows represent the $n_C$ normal vectors of the polyhedron's facets. The vector $c \in \mathbb{R}^{n_C}$ is constructed by solving $n_C$ optimization problems
\begin{equation}
\label{eq:outer_approx_MILPs}
c_i= \max_{x_0 \in \Omega} C_i x_0
\end{equation}
for $i=1,\dots, n_C$. This recipe yields the smallest polyhedron~$\overline\Omega$ corresponding to the shape matrix~$C$ that contains $\Omega$. For example, if we select $C=[I_n, -I_n]^T$, then $\overline \Omega$ coincides with the smallest rectangular box containing~$\Omega$.

To construct an ellipsoidal subset of the stability region, we first consturct a large ball in $\Omega$ around the origin of $\mathbb R^n$. This is difficult because of the nonlinear constraint $u_0=\psi_2(x_0)$ in the definition of $\Omega$. However, if $\psi_2(0)=0$ and $\psi_2(x)$ is Lipschitz continuous with Lipschitz constant $L$, then we have
\[
    \|u_0\|_2 = \|\psi_2(x_0)-\psi_2(0)\|_2 \leq L \|x_0\|_2.
\]
Note that MILP techniques akin to those developed in this paper can be used to compute the Lipschitz constant $L$ if $\psi_2(x)$ is a ReLU NN \cite{jordan2020}. Next, consider the feasible set of the MPC problem~\eqref{eq:mpc_scheme} projected to the first state and input, that is,
\[
    \mathcal F_{x,u}=\big\{(x,u)\in\mathbb R^n\times\mathbb R^m\;\big|\;\exists (\mathbf{x}, \mathbf{u}) \in \mathcal{F}(x),\; u_0=u\big\},
\]
and note that $\mathcal F_{x,u}$ is the projection of a convex polyhedron, thus constituting a lower-dimensional convex polyhedron. By using Fourier-Motzkin elimination, one can find a representation for $\mathcal F_{x,u}$ in terms of linear inequalities, that is, we can construct $D\in\mathbb R^{n_{x,u}\times n}$, $E\in\mathbb R^{n_{x,u}\times m}$ and $d\in\mathbb R^{n_{x,u}}$ with
\[
    \mathcal F_{x,u}=\big\{(x,u)\in\mathbb R^n\times\mathbb R^m\;\big|\; Dx+Eu\leq d\big\}.
\]
In fact, this explicit representation of $\mathcal F_{x,u}$ can be efficiently computed with the MPT toolbox \cite{herceg2013}. We may thus conclude that $\Omega$ is guaranteed to contain the ball $\underline\Omega$ around~$0$ of radius
\begin{equation*}
\begin{aligned}
   &\max_{r \geq 0} \big\{ r\;\big|\; (x_0,u_0)\in \mathcal F_{x,u}~\forall \|x_0\|_2\leq r,\; \|u_0\|_2 \leq L r\big\} \\
   & =\left\{ \begin{array}{c@{\;}l}
        \displaystyle \max_{r \geq 0} & r  \\
        \text{s.t.} & Dx_0+Eu_0\leq d ~ \forall \|x_0\|_2\leq r, \|u_0\|_2 \leq L r
   \end{array}\right.\\
   & =\left\{ \begin{array}{c@{\;}l}
        \displaystyle \max_{r \geq 0} & r  \\
        \text{s.t.} & (\|D_i\|_2 +\|E_i\|_2L)r \leq d_i ~ \forall i=1,\ldots,n_{x,u}
   \end{array}\right.\\
   &= \min_{1,\ldots,n_{x,u}} d_i/(\|D_i\|_2 +\|E_i\|_2L)=r^\star.
\end{aligned}
\end{equation*}
From Theorem~\ref{thm:set_stability} we know that $J^\star(x)$ is a Lyapunov function on $\Omega$ and, consequently, also on~$\underline \Omega\subseteq \Omega$. While $\underline\Omega$ may not be invariant under $\psi_2(x)$, the defining properties of Lyapunov functions imply that any sublevel set of $J^\star(x)$ contained in~$\underbar{$\Omega$}$ must be invariant. As $J^\star(x)$ is convex piecewise quadratic and, in particular, coincides with~$V_N(x)$ on $\mathbb X_N$, its smallest sublevel sets are concentric ellipsoids around~0 \cite[Chapter~7.3]{rawlings2017}. One can thus define $\mathbb X_0^{\psi_2}$ as the largest of these sublevel sets contained in $\underbar{$\Omega$}\cap \mathbb X_N$, which can easily be computed. By construction, the system $Ax+B\psi_2(x)$ is asymptotically stable on~$\mathbb X_0^{\psi_2}$ (but~$\mathbb X_0^{\psi_2}$ constructed in this manner is generally a strict subset of the entire stability region).
\end{diffadd}

\subsection{MIP Reformulations of Bilevel Programs} \label{subsec:MIP_formulation}

The bilevel program~\eqref{eq:stability_opt_org} derived in Section~\ref{subsec:sufficient} cannot be solved directly by standard solvers. We will now show, however, that it can sometimes be reformulated as an MIQP. To this end, we assume from now on that the stage cost function $\ell(x,u)$ as well as the terminal cost function $V_N(x)$ are convex and quadratic. In addition, we assume that the sets $\mathbb X$, $\mathbb U$, $\mathbb X_N$ and $\mathbb X_0$ are convex polyhedra. In this case, the lower level problem in~\eqref{eq:stability_opt_org} aconstitutes a convex QP parameterized by $x_0$ and $\psi_2(x_0)$ and can be represented abstractly as
\begin{equation} \label{eq:qp_reform}
\begin{aligned}
\min_{z} \quad & \frac{1}{2} z^{T} P z + q^Tz \\
\text {s.t.}\, \quad & Az=B[x_0^T, \psi_2(x_0)^T]^T +b \\
& Fz \leq g,
\end{aligned}
\end{equation}
with $z\in \mathbb{R}^{n_z}$ representing the decision variables $(\bar{\mathbf{x}}, \bar{\mathbf{u}})$ and with constant matrices $P \in \mathbb{S}^{n_z}_{++}$, $q \in \mathbb{R}^{n_z}$, $A \in \mathbb{R}^{n_\text{eq} \times n_z}$, $B \in \mathbb{R}^{n_\text{eq} \times (n+m)}$, $b \in \mathbb{R}^{n_\text{eq}}$, $F \in \mathbb{R}^{n_\text{ineq} \times n_z}$, and $g \in \mathbb{R}^{n_\text{ineq}}$.
By Lemma~\ref{lem:qp_MILP}, the unique minimizer of~\eqref{eq:qp_reform} is therefore MILP-representable, and the bilevel program~\eqref{eq:stability_opt_org} is equivalent to
\begin{equation} \label{eq:stability_opt_mip}
\begin{aligned}
\min \quad & J(\tilde{\mathbf{x}}, \tilde{\mathbf{u}})+\ell(x_0,u_0) - \frac{1}{2} z^{T} P z - \tilde{q}^T\tilde{z}-\epsilon\|x_0\|_2^2\\
\text {s.t.}\,\quad & x_0 \in \mathbb{X}_0,~(\tilde{\mathbf{x}}, \tilde{\mathbf{u}}) \in \mathcal{F}(x_0),~z\in\mathbb R^{n_z},~ u_0\in\mathbb R^m\\
& \mu\in\mathbb R^{n_{\text{eq}}},~ \lambda\in\mathbb R^{n_{\text{ineq}}},~ \beta \in\{0,1\}^{n_{\text{ineq}}} \\
& (x_0,u_0) \in \gr({\psi_2})\\
&0 = Az - B[x_0^T, u_0^T]^T - b \\
&0 = Pz + q + A^T\mu + F^T\lambda \\
&0 \leq \lambda_i \leq M\beta_i\quad \forall i=1,\dots,n_\text{ineq} \\
&0 \leq g_i - F_iz \leq M(1-\beta_i)\quad\forall i=1,\dots,n_\text{ineq},
\end{aligned}
\end{equation}
where $\mu$ and $\lambda$ represent the Lagrange multipliers associated with the equality and inequality constraints of the lower level problem~\eqref{eq:qp_reform}, respectively, and $\beta$ is the binary vector used to linearize the complementary slackness conditions. \begin{diffadd}The auxiliary variable~$u_0$ serves as a placeholder for~$\psi_2(x_0)$, and the abstract constraint $(x_0,u_0) \in \gr({\psi_2})$ captures the requirement~$u_0=\psi_2(x_0)$. If~$\psi_2(x_0)$ is MILP-representable (e.g., if~$\psi_2(x_0)$ is a ReLU NN), then \eqref{eq:stability_opt_mip} is equivalent to an indefinite MIQP, which can be readily solved with commercial solvers such as Gurobi~\cite{gurobi}. Similarly, the bilevel program~\eqref{eq:stability_direct_opt_org} can also be reformulated as an MIQP, and the optimization problems~\eqref{eq:outer_approx_MILPs} that need to be solved to construct~$\overline \Omega$ can be reformulated as MILPs. Details are omitted for brevity.\end{diffadd}

\begin{remark}
As its objective function is indefinite quadratic, problem \eqref{eq:stability_opt_mip} is considerably harder to solve than an MILP or an MIQP with a convex QP relaxation. Modern solvers such as Gurobi 9.0 can solve problems of the form~\eqref{eq:stability_opt_mip}, however, by translating the indefinite cost into bilinear constraints and then using cutting plane and spatial branching techniques.
\end{remark}

\section{Case Study: DC-DC Power Converter} \label{sec:case_study}

In this section, we consider the DC-DC power converter from \cite{maddalena2021} as a case study. The original linear MPC controller is approximated via a piecewise affine NN. Evaluating the NN is considerably cheaper, allowing the approximate control policy to be deployed on a low cost microcontroller operating at 80MHz and with a sampling frequency of 10 kHz. Simulation results show the effectiveness of the approximate control in some cases, however, the work gives no guarantee of stability.
Using the techniques introduced above, we can show that the NN controller is indeed stable for the closed-loop system, and give an outer approximation of the stability region.

Additionally, we design a tube MPC robust to input disturbances. The robust controller is then approximated using a NN. Using the formulation in Section~\ref{subsec:_verification_robust} we can guarantee the closed-loop system controlled by the NN satisfies constraints and converges to the minimum robust invariant set.

The model of the DC-DC converter is linearized and discretized, giving us the following two-state $x = (i_L, v_O)$ (current and voltage), and one-input (duty cycle) linear system
\begin{equation*}
    x^+ = Ax + Bu = \begin{bmatrix} 0.971 & -0.010 \\ 1.732 & 0.970 \end{bmatrix}x + \begin{bmatrix} 0.149 \\ 0.181 \end{bmatrix}u.
\end{equation*}
A more detailed derivation of the model dynamics along with additional details on the physical system can be found in \cite{maddalena2021}.

\subsection{Nominal MPC} \label{subsec:nominal_mpc}

The problem of control design for tracking current and voltage references in the DC-DC converter can be formulated as a linear MPC controller. We formulate the MPC controller which regulates to the steady-state $x_\text{eq}=\begin{bmatrix} 0.05 & 5 \end{bmatrix}^T$, and $u_\text{eq}=0.3379$ as
\begin{equation} \label{original_MPC}
\begin{aligned}
    \min_{\mathbf{x}, \mathbf{u}} \quad & \sum_{i=0}^{N-1}\left(\left\|x_{i}-x_{\mathrm{eq}}\right\|_{Q}^{2}+\left\|u_{i}-u_{\mathrm{eq}}\right\|_{R}^{2}\right)\\
    & \quad\quad +\left\|x_{N}-x_{\mathrm{eq}}\right\|_{P}^{2} \\
    \text { s.t.} \quad & \forall i=0, \ldots, N-1, \\
    & x_{i+1}=A x_{i}+B u_{i}, \\
    & \begin{bmatrix} 0 \\ 0 \end{bmatrix} \leq x_i \leq \begin{bmatrix} 0.2 \\ 7 \end{bmatrix}, \quad 0 \leq u_i \leq 1 \\
    & x_{N} \in \mathbb{X}_{N}, \quad x_{0}=x(0),
\end{aligned}
\end{equation}
with horizon length $N=10$, $Q=\diag(90, 1)$, $R=1$, $P$ the solution of the associated discrete-time algebraic Riccati equation, and $\mathbb{X}_N$ the system's maximal invariant set under the LQR policy. The resulting control policy can be seen in Figure~\ref{fig:nominal_mpc}.

\begin{figure}
    \centering
    \resizebox{0.25\textwidth}{!}{\tikzset{every picture/.style={line width=0.75pt}} %

\begin{tikzpicture}[x=0.75pt,y=0.75pt,yscale=-1,xscale=1]

\draw    (100,10) -- (100,27) ;
\draw [shift={(100,30)}, rotate = 270] [fill={rgb, 255:red, 0; green, 0; blue, 0 }  ][line width=0.08]  [draw opacity=0] (10.72,-5.15) -- (0,0) -- (10.72,5.15) -- (7.12,0) -- cycle    ;
\draw  [color={rgb, 255:red, 224; green, 213; blue, 82 }  ,draw opacity=1 ][fill={rgb, 255:red, 255; green, 246; blue, 142 }  ,fill opacity=1 ] (10,30) -- (190,30) -- (190,60) -- (10,60) -- cycle ;
\draw    (100,60) -- (100,77) ;
\draw [shift={(100,80)}, rotate = 270] [fill={rgb, 255:red, 0; green, 0; blue, 0 }  ][line width=0.08]  [draw opacity=0] (10.72,-5.15) -- (0,0) -- (10.72,5.15) -- (7.12,0) -- cycle    ;
\draw  [color={rgb, 255:red, 224; green, 213; blue, 82 }  ,draw opacity=1 ][fill={rgb, 255:red, 255; green, 246; blue, 142 }  ,fill opacity=1 ] (10,80) -- (190,80) -- (190,120) -- (10,120) -- cycle ;
\draw    (100,120) -- (100,137) ;
\draw [shift={(100,140)}, rotate = 270] [fill={rgb, 255:red, 0; green, 0; blue, 0 }  ][line width=0.08]  [draw opacity=0] (10.72,-5.15) -- (0,0) -- (10.72,5.15) -- (7.12,0) -- cycle    ;
\draw  [color={rgb, 255:red, 224; green, 213; blue, 82 }  ,draw opacity=1 ][fill={rgb, 255:red, 255; green, 246; blue, 142 }  ,fill opacity=1 ] (10,140) -- (190,140) -- (190,170) -- (10,170) -- cycle ;
\draw    (100,170) -- (100,187) ;
\draw [shift={(100,190)}, rotate = 270] [fill={rgb, 255:red, 0; green, 0; blue, 0 }  ][line width=0.08]  [draw opacity=0] (10.72,-5.15) -- (0,0) -- (10.72,5.15) -- (7.12,0) -- cycle    ;
\draw  [color={rgb, 255:red, 224; green, 213; blue, 82 }  ,draw opacity=1 ][fill={rgb, 255:red, 255; green, 246; blue, 142 }  ,fill opacity=1 ] (10,190) -- (190,190) -- (190,220) -- (10,220) -- cycle ;
\draw    (103,220) -- (103,237) ;
\draw [shift={(103,240)}, rotate = 270] [fill={rgb, 255:red, 0; green, 0; blue, 0 }  ][line width=0.08]  [draw opacity=0] (10.72,-5.15) -- (0,0) -- (10.72,5.15) -- (7.12,0) -- cycle    ;

\draw (85.5,18.5) node  [font=\footnotesize]  {$x$};
\draw (100,45) node  [font=\footnotesize]  {$Fx+f$};
\draw (85.5,68.5) node  [font=\footnotesize]  {$y_{1}$};
\draw (101.5,99.5) node  [font=\footnotesize]  {$ \begin{array}{l}
\arg\min \|Hz+y_{1} \|_2^{2} +\varepsilon \|z\|_2^{2}\\
\ \ \ \ \ \text{ s.t. } z\geq 0
\end{array}$};
\draw (85.5,128.5) node  [font=\footnotesize]  {$y_{2}$};
\draw (100,155) node  [font=\footnotesize]  {$Gy_{2} +g$};
\draw (85.5,178.5) node  [font=\footnotesize]  {$y_{3}$};
\draw (100,205) node  [font=\footnotesize]  {$\text{sat}( y_{3})$};
\draw (88.5,228.5) node  [font=\footnotesize]  {$u$};

\end{tikzpicture}}
    \caption{The piecewise affine NN architecture.}
    \label{fig:PWA-NN}
\end{figure}
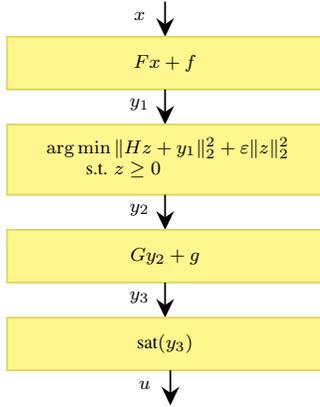

We chose the same architecture for the approximate controller as in \cite{maddalena2021}, which can be seen in Figure~\ref{fig:PWA-NN}, with the appropriate dimensions of the layers fully determined by the size of the optimization variable $z \in \mathbb{R}^{n_z}$, which was chosen as $n_z=3$. Note that this is not a classic NN architecture, since the second layer is the solution of a parametric QP. This shows the flexibility of our toolbox, since we also allow more unconventional layer structures.

\begin{diffadd}
The idea of the second layer is to approximate the original MPC policy, which is a parametric QP, with a smaller parametric QP where we can control the complexity with the dimension $n_z$. By leveraging the implicit function theorem, one can optimize the QP parameters. For deployment, the solution map of the QP can then be precalculated like in explicit MPC. Compared to explicit MPC, we can trade off the solution complexity with approximation accuracy. The reader is referred to \cite{maddalena2021} for a more detailed description and analysis of the architecture.
\end{diffadd}

For the training data, 5000 samples of the original controller \eqref{original_MPC} uniformly distributed from the feasible region are taken, and the approximate controller is then trained using Adam \cite{kingma2015} as the optimizer. To formulate the convex optimization layer, cvxpylayers \cite{agrawal2019} was used. The trained simplified controller and the absolute approximation error can be seen in Figure~\ref{fig:nominal_mpc}.

Using the formulation in Section~\ref{sec:error_computation} we find an absolute worst-case approximation error of $\gamma = 0.24$ at $i_L=0$ and $v_O=6.79$. Similarly, using both formulations in Sections \ref{subsec:sufficient} and \ref{subsec:direct}, by solving an indefinite MIQP respectively, we can verify the stability of the closed-loop system, which is the case for this example. With this, we have successfully verified the approximate control policy against the nominal MPC. Additionally, we solve \eqref{eq:outer_approx_MILPs} to get an outer and inner approximation of the stability region $\Omega$, which can be seen in Figure~\ref{fig:nominal_mpc}. Note that the outer approximation is smaller than the feasible region of the MPC controller. This is expected because due to the approximation error, we can not expect to achieve the same stability region. \begin{diffadd}On the other hand, the inner approximation is very conservative, as one would expect, since we are using conservative Lipschitz bounds for its calculation.\end{diffadd}

\begin{figure}
    \centering
    \begin{subfigure}[b]{0.23\textwidth}
        \centering
        \includegraphics[width=\textwidth]{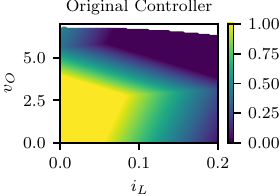}
    \end{subfigure}
    \begin{subfigure}[b]{0.23\textwidth}
        \centering
        \includegraphics[width=\textwidth]{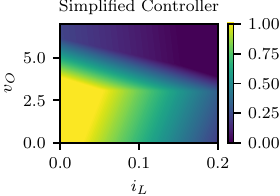}
    \end{subfigure}
    \begin{subfigure}[b]{0.3\textwidth}
        \centering
        \includegraphics[width=\textwidth]{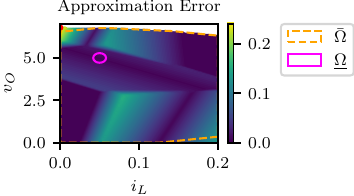}
    \end{subfigure}
    \caption{Original, approximate/simplified control policy (top), and approximation error to the original MPC control policy with outer and inner approximation of stability region (bottom).}
    \label{fig:nominal_mpc}
\end{figure}

\subsection{Robust Tube MPC} \label{subsec:example_robust_mpc}

Starting from the MPC formulation in \eqref{original_MPC}, we formulate a robust tube MPC \cite{mayne2005}. We assume that we have uncertain dynamics
\begin{equation*}
    x^+ = Ax + Bu + Bw,
\end{equation*}
with disturbance set $\mathbb{W} = \left\{ w \in \mathbb{R} \;\middle|\; -0.1 \leq w \leq 0.1\right\}$. The objective is now to design a robust controller to withstand an input disturbance with amplitude 0.1. The input disturbance is typically associated to the distortion introduced by the switching nature of the converter, but here we use it to verify the stability of the approximate policy.

We extend the nominal MPC formulation \eqref{original_MPC} to the following robust tube MPC formulation
\begin{equation} \label{robust_MPC}
\begin{aligned}
    \min_{\mathbf{z}, \mathbf{v}} \quad & \sum_{i=0}^{N-1}\left(\left\|z_{i}-x_{\mathrm{eq}}\right\|_{Q}^{2}+\left\|v_{i}-u_{\mathrm{eq}}\right\|_{R}^{2}\right)+\left\|z_{N}-x_{\mathrm{eq}}\right\|_{P}^{2} \\
    \text { s.t.} \quad & \forall i=0, \ldots, N-1, \\
    & z_{i+1}=A z_{i}+B v_{i}, \\
    & z_i \in \mathbb{X} \ominus \mathcal{E}, \quad v_i \in \mathbb{U} \ominus K \mathcal{E} \\
    & z_{N} \in \mathbb{X}_{N}, \quad x(0) \in z_{0} \oplus \mathcal{E},
\end{aligned}
\end{equation}
where $\mathcal{E}$ is the minimum robust invariant set with respect to a linear feedback gain $K$, and the control law is given by $\psi_1(x)=K(x-z_0^\star(x))+v_0^\star(x)$. Here $K$ is the feedback gain and $P$ is the solution of the Riccati equation associated with the discrete LQR problem. The policy is robustly stable for input disturbances in $\mathbb{W}$ \cite[Theorem~1]{mayne2005}. The horizon length was increased to $N=20$ to have a bigger feasible region. Note that \eqref{robust_MPC} is a pQP since the sets $\mathbb{X} \ominus \mathcal{E}$, $\mathbb{U} \ominus K \mathcal{E}$, and $\mathcal{E}$ are polyhedra that can be precalculated.

We then approximate the tube MPC with a NN with 2 hidden layers, 50 neurons each, and a saturation layer at the end to clip the input between -1 and 1. Similar as in Section~\ref{subsec:nominal_mpc}, 5000 samples of the tube MPC uniformly in the feasible region are taken and the NN is trained using standard techniques. The resulting NN controller together with the original tube MPC can be seen in Figure~\ref{fig:robust_mpc}.

\begin{figure}
    \centering
    \begin{subfigure}[b]{0.23\textwidth}
        \centering
        \includegraphics[width=\textwidth]{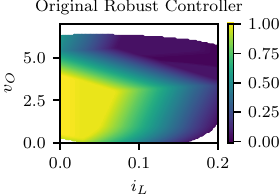}
    \end{subfigure}
    \begin{subfigure}[b]{0.23\textwidth}
        \centering
        \includegraphics[width=\textwidth]{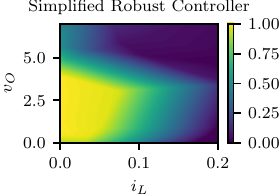}
    \end{subfigure}
    \begin{subfigure}[b]{0.23\textwidth}
        \centering
        \includegraphics[width=\textwidth]{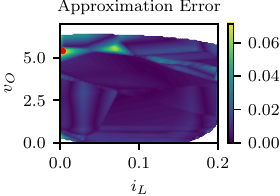}
    \end{subfigure}
    \caption{Original, approximate/simplified robust control policy (top), and approximation error to the original robust MPC control policy (bottom).}
    \label{fig:robust_mpc}
\end{figure}

We apply the formulation from Section~\ref{sec:error_computation} to find the worst case approximation error, and get a value of $\gamma=0.073$. Since we designed our controller to be robust for input perturbations with a maximum magnitude of $0.1$, we have shown that the NN controller satisfies constraints and converges to the minimum robust invariant set in the feasible region of the tube MPC.

The set on which the closed-loop system is stable is directly given by the feasible set of the tube MPC, compared to the nominal MPC where we could only calculate an outer approximation of the stability region. Additionally, to verify stability, we only have to solve an MILP and not an indefinite MIQP, which might be considerably harder to solve. But for larger problem sizes, the solve times are not necessarily larger as we discuss in Section~\ref{subsec:num_exp}.

\subsection{Experimental Validation}

\begin{figure}
    \centering
    \includegraphics[width=0.48\textwidth]{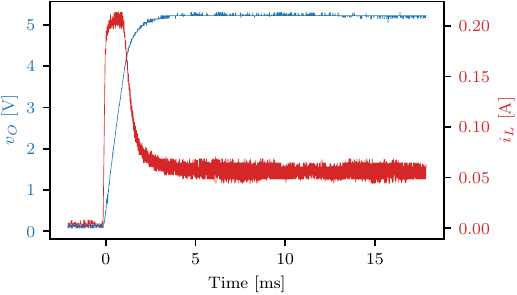}
    \caption{Closed-loop startup response of the approximate MPC.}
    \label{fig:experiment}
\end{figure}

The approximate nominal controller in Section~\ref{subsec:nominal_mpc} has been implemented on an inexpensive STM32L476 microcontroller for a prototype of the described DC-DC power converter. The controller is running at a frequency of 10 kHz, with the execution time of the control law being between 22.0 $\mu$s and 27.5 $\mu$s. The closed-loop startup response can be seen in Figure~\ref{fig:experiment}. We can see that the constraints for both the voltage $v_O$ and the current $i_L$ are satisfied, and an excellent transient with a settling time of 2.33 ms. For more elaborate implementation details, the reader is referred to \cite{maddalena2021}.

\section{Discussion}

\subsection{Comparison to Explicit MPC}

\begin{diffadd}
Similar to explicit MPC, we obtain an explicit solution map of the baseline MPC policy. The main difference lies in the representation and exactness of this mapping. Whereas the explicit MPC solution map is exact, compared to the approximate, but verified, solution map obtained by our method, it comes at the cost of storing polytopic regions of the piecewise linear solution map. Even worse, the number of stored regions grows, in the worst case, exponentially in the dimension of the state and the number of constraints of the baseline MPC policy \cite{alessio2009}. Additionally, the online search for the polytope containing the current state may require excessive processing power or storage space. Although these computational challenges can be mitigated by using search trees \cite{jones2006} or hash tables \cite{bayat2011}, the required memory may remain too large \cite{kvasnica2012}, especially for embedded systems.

The NN approximating the baseline MPC policy in our method does not suffer from this exponential growth. Thus, deploying the policy online does not require excessive amounts of memory and no online search. The inference of the NN is deterministic and does not change based on the current state. This makes it especially suitable for time-critical systems where low deterministic evaluation times are critical. In \cite{maddalena2020}, the authors could reduce the evaluation time and required memory to store the solution map from 12.9~ms and 518~kB for the explicit MPC to 1.5~ms and 17~kB for the NN approximation. Considering this was a relatively simple system with only $n=2$ states and $m=1$ input, it is to be expected that for larger systems the savings in online evaluation time and required memory storage only increases.

In the worst case, the verification method described in Section~\ref{sec:stability_verification} scales worse than the offline calculation of explicit MPC, since it not only exponentially grows with the number of constraints, but also with the number of neurons of the NN. The only advantage is that our method is that we can rely on efficient branch-and-cut methods to reduce the search space, while explicit MPC has to enumerate every polytopic region. But it is worth noting that all this computational complexity is offline, and our method does not suffer from the exponential scaling in the online setting.
\end{diffadd}

\subsection{Comparison to Lipschitz Based Methods}

We compare our method to the approach introduced in \cite{fabiani2021}. The authors provide the following two main results.

\begin{lemma}[{{\cite[Lemma 3.2]{fabiani2021}}}] \label{lem:fabiani2021}
There exist $\zeta_c > 0$ such that, if $\gamma < \zeta_c$, the LTI system in \eqref{eq:linear_system} with NN controller $\psi_2(x)$ converges in finite time to a neighborhood of the origin, for all $x_0 \in \Omega_c$, with $c \coloneqq \max_a \left\{a \geq 0 \mid \Omega_{a} \subseteq \mathbb{X}\right\}$.
\end{lemma}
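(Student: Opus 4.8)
The plan is to use $J^\star$ as a common Lyapunov function and to treat the approximation error $\psi_2-\psi_1$ as a bounded input perturbation, in the spirit of an input-to-state argument. First I would recall that, under Assumption~\ref{as:terminal_set}, $J^\star$ is a Lyapunov function for the nominal closed loop $x^+=Ax+B\psi_1(x)$ with a strict decrease $J^\star(Ax+B\psi_1(x))-J^\star(x)\le-\ell(x,\psi_1(x))\le-\alpha\|x\|_2^2$ for some $\alpha>0$ (canonically $\alpha=\lambda_{\min}(Q)$), and that every sublevel set $\Omega_a=\{x\mid J^\star(x)\le a\}$ with $a\le c$ is forward invariant for the nominal system and contained in $\mathbb{X}$. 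Since $J^\star$ is the value function of a linear–quadratic MPC, it is continuous and piecewise quadratic, hence Lipschitz on the compact set $\Omega_c$; I denote its Lipschitz constant there by $L$.

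Next I would bound the Lyapunov decrease along the approximate closed loop $x^+=Ax+B\psi_2(x)$ by writing it as the nominal decrease $J^\star(Ax+B\psi_1(x))-J^\star(x)$ plus the perturbation $J^\star(Ax+B\psi_2(x))-J^\star(Ax+B\psi_1(x))$. The first piece is at most $-\alpha\|x\|_2^2$ by the nominal Lyapunov property, while the second is controlled by Lipschitz continuity, $|J^\star(Ax+B\psi_2(x))-J^\star(Ax+B\psi_1(x))|\le L\|B\|_\infty\|\psi_2(x)-\psi_1(x)\|_\infty\le L\|B\|_\infty\gamma$, invoking the definition of $\gamma$ in \eqref{eq:abs_error_orig}. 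Combining the two pieces gives
\[
J^\star(x^+)-J^\star(x)\le-\alpha\|x\|_2^2+L\|B\|_\infty\gamma,
\]
valid whenever both $x$ and $x^+$ lie in $\Omega_c$, where the Lipschitz estimate and feasibility of $\psi_1$ are available.

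Then I would isolate the neighborhood of the origin on which the guarantee degrades. Define $\mathcal{N}(\gamma)=\{x\mid\alpha\|x\|_2^2\le 2L\|B\|_\infty\gamma\}$. Outside $\mathcal{N}(\gamma)$ the decrease is at least a fixed amount, $J^\star(x^+)-J^\star(x)\le-L\|B\|_\infty\gamma<0$, so from any $x_0\in\Omega_c$ the trajectory must enter $\mathcal{N}(\gamma)$ after at most $c/(L\|B\|_\infty\gamma)$ steps, which is exactly the finite-time claim. The threshold $\zeta_c$ is then chosen so that, whenever $\gamma<\zeta_c$, the set $\mathcal{N}(\gamma)$ is contained in a strictly smaller sublevel set $\Omega_{c'}$ with $c'<c$; this yields strict decrease on the annulus $\Omega_c\setminus\Omega_{c'}$, which makes $\Omega_c$ forward invariant for the approximate closed loop and keeps every iterate feasible in $\mathbb{X}$.

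The main obstacle I anticipate is twofold, and both parts concern validity of the estimate along the entire trajectory. First, one must secure a genuine, finite Lipschitz constant $L$ for the piecewise-quadratic $J^\star$ on $\Omega_c$ and express the perturbation bound $L\|B\|_\infty\gamma$ only through quantities available a priori. Second, and more delicate, is the simultaneous invariance argument: the decrease inequality holds only while $x^+\in\Omega_c$, so invariance of $\Omega_c$ and the finite-time bound must be established \emph{jointly} rather than separately. Choosing $\zeta_c$ small enough that $\mathcal{N}(\gamma)$ sits strictly inside $\Omega_c$ closes this loop, and the explicit dependence of $\zeta_c$ on $c$ (through $L$ and $\alpha$) is where the quantitative work concentrates.
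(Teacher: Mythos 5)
This lemma is not actually proved in the paper: it is imported verbatim from \cite[Lemma 3.2]{fabiani2021} as background for the comparison in the Discussion section, so there is no in-paper proof to measure your attempt against. That said, your reconstruction follows the same general strategy as the cited reference --- treat $J^\star$ as a Lipschitz, piecewise-quadratic ISS-Lyapunov function and the approximation error as a bounded additive input disturbance --- so the overall approach is the right one.

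There is, however, a genuine gap in how you close the circularity that you yourself flag. Your decrease estimate $J^\star(x^+)-J^\star(x)\le-\alpha\|x\|_2^2+L\|B\|_\infty\gamma$ is obtained by applying the Lipschitz bound for $J^\star$ on $\Omega_c$ to the pair $\bigl(Ax+B\psi_1(x),\,Ax+B\psi_2(x)\bigr)$. This requires two things you have not secured: (i) $J^\star(Ax+B\psi_2(x))$ must be finite at all, i.e., the perturbed successor must lie in the MPC feasibility region, which is the domain of $J^\star$; and (ii) both points must lie in the set on which the constant $L$ was computed. Your proposed fix --- choose $\zeta_c$ so that $\mathcal{N}(\gamma)$ sits strictly inside $\Omega_c$ --- controls where trajectories accumulate, but not the validity of the estimate itself: to conclude $x^+\in\Omega_c$ you invoke the decrease inequality, and to invoke the decrease inequality you need $x^+$ to already lie in the region where $J^\star$ is defined and $L$-Lipschitz. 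The standard way to break this circle, and the place where $\zeta_c$ actually acquires its quantitative definition, is a margin/inflation argument: the nominal successor lands in the strictly smaller compact sublevel set $\Omega_{c-\alpha\|x\|_2^2}$, and one shows by compactness and continuity of $J^\star$ that a suitable $\delta$-inflation of that set is still contained in $\Omega_c$ (hence in the domain of $J^\star$), then requires $\|B\|\gamma\le\delta$. Without this step, or an explicit assumption that the relevant sublevel sets lie in the interior of the feasibility region, the Lyapunov estimate may simply be undefined along the perturbed trajectory --- this is exactly the known phenomenon that nominal MPC with state constraints can have zero robustness margin. Two minor points: your choice $\alpha=\lambda_{\min}(Q)$ requires $Q\in\mathbb{S}^n_{++}$, whereas the paper only assumes $Q\in\mathbb{S}^n_+$; and the norms in the bound $L\|B\|_\infty\|\psi_2(x)-\psi_1(x)\|_\infty$ should be made consistent with the norm in which $L$ is defined, at the cost of a dimension-dependent constant.
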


\begin{theorem}[{{\cite[Theorem 3.4]{fabiani2021}}}] \label{thm:fabiani2021}
Let Assumption~\ref{as:terminal_set} hold. There exist $\zeta_c$ and $\vartheta$ such that, if $\gamma < \zeta_c$ and $\mathcal{L}_{\infty}\left(\psi_1(x)-\psi_2(x), \mathbb{X}_{N}\right) < \vartheta$, and $b \geq 0$ can be chosen so that
$\Omega_b \subseteq \mathbb{X}_N$, then the LTI system in \eqref{eq:linear_system} with NN controller $\psi_2(x)$ converges exponentially to the origin, for all $x_0 \in \Omega_c$, with $c \coloneqq \max_a \left\{a \geq 0 \mid \Omega_{a} \subseteq \mathbb{X}\right\}$.
\end{theorem}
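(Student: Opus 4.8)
The plan is to use the MPC value function $J^\star$ as a common Lyapunov function and to split the region of attraction $\Omega_c = \{x : J^\star(x) \le c\}$ (the largest sublevel set of $J^\star$ contained in $\mathbb{X}$) into an outer annulus $\Omega_c \setminus \Omega_b$ and an inner set $\Omega_b \subseteq \mathbb{X}_N$, treating the two regions with the two different hypotheses $\gamma < \zeta_c$ and $\mathcal{L}_\infty < \vartheta$. By Assumption~\ref{as:terminal_set} and standard MPC arguments, the \emph{nominal} closed loop $x^+ = Ax + B\psi_1(x)$ satisfies $J^\star(x^+) - J^\star(x) \le -\ell(x,\psi_1(x)) \le -\lambda_{\min}(Q)\|x\|_2^2$. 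Writing the approximate next state as $Ax + B\psi_2(x) = (Ax + B\psi_1(x)) - B g(x)$ with $g := \psi_1 - \psi_2$, the whole argument reduces to controlling the extra term $J^\star(Ax+B\psi_2(x)) - J^\star(Ax+B\psi_1(x))$ induced by $g$.

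For the outer region I would first establish that $J^\star$ is Lipschitz on the compact set $\Omega_c$ with some constant $L$ (it is continuous and piecewise quadratic, hence Lipschitz on any compact set). Then $|J^\star(Ax+B\psi_2(x)) - J^\star(Ax+B\psi_1(x))| \le L\|B\|\,\|g(x)\| \le L\|B\|\gamma$, so that $J^\star(x^+) - J^\star(x) \le -\lambda_{\min}(Q)\|x\|_2^2 + L\|B\|\gamma$. Since $\Omega_b$ contains a ball around the origin, every $x \in \Omega_c \setminus \Omega_b$ satisfies $\|x\|_2 \ge \rho$ for some $\rho > 0$; setting $\zeta_c := \lambda_{\min}(Q)\rho^2 / (L\|B\|)$ makes the right-hand side strictly negative whenever $\gamma < \zeta_c$. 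This yields two facts: $\Omega_c$ is positively invariant (sublevel sets of $J^\star$ shrink), so the state constraints $\mathbb{X} \supseteq \Omega_c$ are never violated, and $J^\star$ decreases by at least a fixed positive amount on $\Omega_c \setminus \Omega_b$, so the trajectory enters $\Omega_b$ in a uniformly bounded number of steps.

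For the inner region I would exploit that on $\mathbb{X}_N$ the value function is exactly quadratic, $J^\star(x) = x^\top P x$, and $\psi_1(x) = Kx$, so the nominal closed loop is linear with $(A+BK)^\top P (A+BK) - P = -(Q + K^\top R K)$. Since $g(0) = 0$ (both policies vanish at the origin), the Lipschitz bound gives $\|g(x)\|_2 \le \mathcal{L}_\infty \|x\|_2$ on $\mathbb{X}_N$. Expanding $J^\star(x^+) = ((A+BK)x - Bg(x))^\top P ((A+BK)x - Bg(x))$ and bounding the cross and quadratic terms yields $J^\star(x^+) - J^\star(x) \le -(\lambda_{\min}(Q+K^\top R K) - c_1 \mathcal{L}_\infty - c_2 \mathcal{L}_\infty^2)\|x\|_2^2$ for explicit constants $c_1, c_2$ depending on $\|A+BK\|$, $\|B\|$, $\|P\|$. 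Taking $\vartheta$ to be the positive root of $c_1 \vartheta + c_2 \vartheta^2 = \lambda_{\min}(Q+K^\top R K)$ makes the bracket strictly positive for $\mathcal{L}_\infty < \vartheta$, so $J^\star(x^+) \le (1 - \alpha/\lambda_{\max}(P)) J^\star(x)$ for some $\alpha > 0$. This geometric contraction of $J^\star$ gives exponential convergence of $x$ to the origin and simultaneously shows that $\Omega_b$ is invariant.

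Concatenating the two phases — finite-time entry into $\Omega_b$ followed by geometric decay inside it — then yields exponential convergence to the origin for every $x_0 \in \Omega_c$, while invariance of $\Omega_c$ guarantees constraint satisfaction throughout. The main obstacle I expect is the bookkeeping that couples the three constants: one must pick $b$ so that $\Omega_b \subseteq \mathbb{X}_N$ and yet large enough that the additive-error neighborhood $\{x : \lambda_{\min}(Q)\|x\|_2^2 \le L\|B\|\gamma\}$ lies inside $\Omega_b$, and the resulting $\rho$ must be compatible with $\gamma < \zeta_c$. The estimate $L$ of the (non-smooth) value function's Lipschitz constant on $\Omega_c$ is the technically delicate ingredient, together with verifying that $\psi_2$ respects the input constraints $\mathbb{U}$ so that $J^\star(x^+)$ stays well defined along the entire trajectory.
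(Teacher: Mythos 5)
This theorem is quoted from the cited reference \cite[Theorem 3.4]{fabiani2021} purely for comparison purposes; the present paper contains no proof of it, so your proposal can only be measured against the original source's argument. Your two-phase architecture --- an ISS-style decrease of $J^\star$ with additive error $\gamma$ on the annulus $\Omega_c \setminus \Omega_b$ (using Lipschitz continuity of the piecewise-quadratic value function on a compact set), followed by a quadratic-Lyapunov contraction inside $\Omega_b \subseteq \mathbb{X}_N$ where the MPC law coincides with LQR and the error admits a linear-growth bound, concatenated into finite-time entry plus geometric decay --- is indeed the same strategy as the cited proof, and the concatenation logic is sound.

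There are, however, two genuine gaps. First, you assert $g(0)=0$ because ``both policies vanish at the origin.'' For the MPC policy $\psi_1(0)=0$ is standard, but $\psi_2(0)=0$ is \emph{not} automatic for a trained network; this is exactly why the present paper's own Theorem~\ref{thm:set_stability} must explicitly assume $\psi_2(0)=0$. If $g(0)\neq 0$ the origin is not an equilibrium of $x^+=Ax+B\psi_2(x)$ and exponential convergence \emph{to the origin} is simply false --- your inner-phase bound $\|g(x)\|_2 \le \mathcal{L}_\infty \|x\|_2$ silently uses $g(0)=0$, and without it you only obtain convergence to a neighborhood (i.e., the content of Lemma~\ref{lem:fabiani2021}, not of the theorem). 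You must either assume this, or derive it from the precise definition of $\mathcal{L}_\infty(\cdot,\mathbb{X}_N)$: if it is an origin-anchored gain, finiteness forces $g(0)=0$; if it is a genuine Lipschitz constant, it does not. Second, the identities $J^\star(x)=x^\top P x$ and $\psi_1(x)=Kx$ on $\mathbb{X}_N$ do not follow from Assumption~\ref{as:terminal_set} as stated (which only posits \emph{some} linear $v$ and \emph{some} terminal Lyapunov function $V_N$); they require the specific LQR terminal ingredients with $\mathbb{X}_N$ the maximal positively invariant set of the LQR loop, i.e., the ``standard approach'' described after the assumption. Two smaller points you should also patch: the Lipschitz bound on $J^\star$ may only be invoked at $Ax+B\psi_2(x)$ after establishing that this perturbed successor state lies in the MPC feasible region (recursive feasibility under perturbation), since $J^\star$ is undefined outside it; and the paper only requires $Q\in\mathbb{S}^n_+$, so $\lambda_{\min}(Q)$ may be zero --- your decrease estimates need $Q\in\mathbb{S}^n_{++}$ or a detectability-type argument.
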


Here $\Omega_a$ denotes the $a$-sublevel set of $J^\star(x)$, and $\mathcal{L}_{\alpha}(F, \mathcal{S})$ is the local $\alpha$-Lipschitz constant over some set $\mathcal{S} \subseteq \mathbb{R}^n$ for a given mapping $F: \mathbb{R}^{n} \rightarrow \mathbb{R}^{m}$.

Lemma~\ref{lem:fabiani2021} states that the NN policy is input-to-state-stable (ISS) if the worst case approximation error is smaller than a constant $\zeta_c$. This does not guarantee asymptotic stability yet, but in Theorem~\ref{thm:fabiani2021} exponential convergence to the origin is shown if the Lipschitz constant of the approximation error is smaller than a constant $\vartheta$.

We are not going in more detail on how to calculate these constants. The reader is referred to \cite{fabiani2021} for more details, but it turns out that these constants become small in practice. Since the NN almost surely does not approximate the baseline control policy arbitrary well in practice, the Lipschitz constant of the approximation error can grow significantly, especially if the NN is over parameterized. To demonstrate this, we consider a simple example of a double integrator. We define the baseline MPC policy
\begin{equation} \label{eq:double_integrator}
\begin{aligned}
    \min_{\mathbf{x}, \mathbf{u}} \quad & \sum_{i=0}^{N-1}\left(\left\|x_{i}\right\|_{Q}^{2}+\left\|u_{i}\right\|_{R}^{2}\right) +\left\|x_{N}\right\|_{P}^{2} \\
    \text { s.t.} \quad & \forall i=0, \ldots, N-1, \\
    & x_{i+1}= \begin{bmatrix} 1 & 1 \\ 0 & 1 \end{bmatrix} x_{i} + \begin{bmatrix} 0 \\ 1 \end{bmatrix} u_{i}, \\
    & \begin{bmatrix} -10 \\ -10 \end{bmatrix} \leq x_i \leq \begin{bmatrix} 10 \\ 10 \end{bmatrix}, \quad -1 \leq u_i \leq 1 \\
    & x_{N} \in \mathbb{X}_{N}, \quad x_{0}=x(0),
\end{aligned}
\end{equation}
with horizon length $N=10$, $Q=\diag(1, 1)$, $R=0.1$, $P$ the solution of the associated discrete-time algebraic Riccati equation, and $\mathbb{X}_N$ the system's maximal invariant set under the LQR policy.

We estimate the largest sublevel set $\Omega_c \subseteq \mathbb{X}$ with $c=180$, giving us $\zeta_{180}=0.1673$. This will only guarantee ISS. To prove exponential stability, we estimate the largest sublevel $\Omega_b \subseteq \mathbb{X}_N$ with $b=1.1$, giving us $\zeta_{1.1}=0.0010$ and $\vartheta=0.0436$.

\begin{figure}
    \centering
    \includegraphics[width=0.48\textwidth]{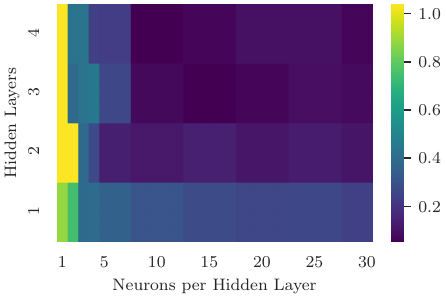}
    \caption{Worst case approximation error $\gamma$.}
    \label{fig:approximation_error_1000}
\end{figure}

\begin{figure}
    \centering
    \includegraphics[width=0.48\textwidth]{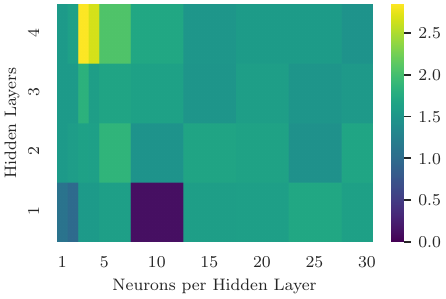}
    \caption{Lipschitz constant $\mathcal{L}_{\infty}\left(\psi_1(x)-\psi_2(x), \mathbb{X}_{N}\right)$ of the approximation error.}
    \label{fig:approximation_error_lipschitz_constant_1000}
\end{figure}

For the training data, 1000 samples of the baseline controller \eqref{eq:double_integrator} uniformly distributed from the feasible region are taken, and the approximate controller is then trained using L-BFGS for different hidden layers and neurons per hidden layer. The resulting worst case approximation error $\gamma$ can be seen in Figure~\ref{fig:approximation_error_1000}. Note that for sufficiently large NN architectures $\gamma < \zeta_{180}$, but $\gamma > \zeta_{1.1}$ for all architectures. Hence, with the approach from \cite{fabiani2021} it is possible to show ISS, but not exponential stability. In Figure~\ref{fig:approximation_error_lipschitz_constant_1000} we also plot the Lipschitz constant $\mathcal{L}_{\infty}\left(\psi_1(x)-\psi_2(x), \mathbb{X}_{N}\right)$ of the approximation error. Note that it is generally orders of magnitude larger than $\vartheta$, making it impossible to verify asymptotic stability even if the approximation error is small enough.

\begin{figure}
    \centering
    \includegraphics[width=0.48\textwidth]{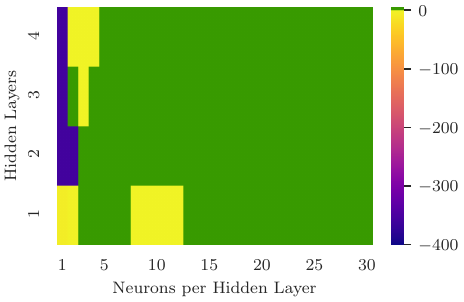}
    \caption{Stability certificate $\xi$ for solving \eqref{eq:stability_direct_opt_org}. The green area shows successful verification of asymptotic stability.}
    \label{fig:verify_stability_sufficient_1000}
\end{figure}

As a comparison, we run our direct verification method by solving \eqref{eq:stability_direct_opt_org}. The results can be seen in Figure~\ref{fig:verify_stability_sufficient_1000}. Note that we not only verify asymptotic stability, but did so even for NN policies with a very small number of hidden layers and neurons, where the method in \cite{fabiani2021} failed to show ISS.

\begin{diffadd}
The failure to verify stability for NNs with 1 hidden layer and 8–12 neurons per layer is an artifact of the stochastic learning process. The outcome of the learning process depends on the initial condition and general hyperparameters (e.g., batch size, learning rate, etc.). Hence, with a more careful training procedure, and hyperparameter tuning, it is to be expected to be able to learn a NN control policy that can be verified for stability.
\end{diffadd}

\subsection{Numerical Experiments} \label{subsec:num_exp}

In this section, we assess the numerical performance of our approach. All problems were solved using Gurobi 9.5 \cite{gurobi} using 32 Threads on a workstation with an AMD Ryzen Threadripper 3990X 4.3 GHz CPU and 32 GB of RAM.

We approximate and verify against MPC problem \eqref{original_MPC}, and use a ReLU NN architecture as introduced in Section~\ref{subsec:relu_nn} to approximate the MPC policy. For the training, we collect 2000 samples from the MPC controller uniformly distributed on the feasible region, and the NN is trained using PyTorch \cite{paszke2019} with ADAM \cite{kingma2015} as the optimizer.

Here, we are not necessarily interested in the approximation accuracy of the NN, but more in the numerical performance and scaling of the to-be-solved optimization problems. Applying this method to a specific problem requires naturally more tuning since the verification will only succeed if the NN approximates the MPC policy closely enough, but the presented results will still give a quantitative insight in solve times and what can be expected if applied to a more specific problem.

\begin{figure}
    \centering
    \includegraphics[width=0.48\textwidth]{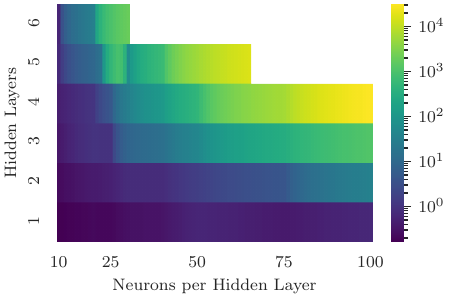}
    \caption{Solve times of \eqref{eq:error_robust_MPC_verification} in seconds with MPC horizon $N=10$.}
    \label{fig:comp_times_abs_error_10}
\end{figure}

In a first experiment, we fix the horizon of the MPC controller to be $N=10$, and solve problem \eqref{eq:error_robust_MPC_verification} for different NN architectures varying in the number of hidden layers and number of neurons per hidden layer. The resulting solve times can be seen in Figure~\ref{fig:comp_times_abs_error_10}. We can note that the computation time increases exponentially with the number of hidden layers and neurons per hidden layer. This is as expected, since each neuron adds a binary decision variable to the MILP, increasing the solve time exponentially. On the other hand, this also means that deeper architectures are preferred since the representational power of the NN grows faster with depth as compared to increasing the number of neurons per hidden layer, resulting in architectures which can be verified more quickly.
    
\begin{figure}
    \centering
    \includegraphics[width=0.48\textwidth]{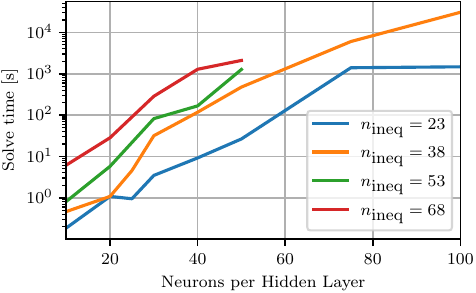}
    \caption{Solve times of \eqref{eq:error_robust_MPC_verification} in seconds for 4 hidden layers and MPC horizon lengths $N=5,10,15,20$.}
    \label{fig:comp_times_ineq}
\end{figure}

Next, we investigate the influence of the horizon length $N$ of the MPC problem. For this, we fix the NN to have four hidden layers. Increasing the horizon increases the number of inequalities in the parametric QP. Hence, we expect the solve times to increase since we add a binary decision variable per inequality, which can be observed in Figure~\ref{fig:comp_times_ineq}. The computation time increases with the number of inequalities and the number of neurons per hidden layer. Note that the complexity does not depend on the state dimension of the system, meaning that our method does not necessarily suffer from the curse of dimensionality.

\begin{figure}
    \centering
    \includegraphics[width=0.48\textwidth]{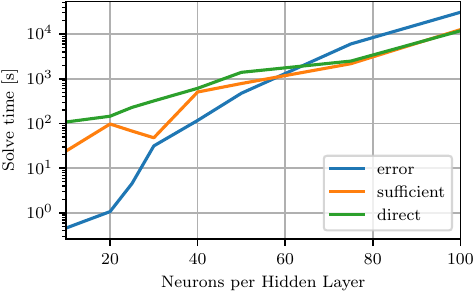}
    \caption{Solve times of \eqref{eq:error_robust_MPC_verification}, \eqref{eq:stability_opt_org}, and \eqref{eq:stability_direct_opt_org} in seconds for 4 hidden layers and MPC horizon $N=10$.}
    \label{fig:comp_times_all_methods}
\end{figure}

As a final experiment, we compare the different methods against each other. For this, we fix the horizon to $N=10$ and the NN to have four hidden layers. We solve for the maximum absolute error, and verify the MPC approximation using the sufficient and direct methods introduced in Section~\ref{sec:stability_verification} by solving problems \eqref{eq:error_robust_MPC_verification}, \eqref{eq:stability_opt_org}, and \eqref{eq:stability_direct_opt_org} respectively. The resulting computation times can be seen in Figure~\ref{fig:comp_times_all_methods}. We can see that for small problems, solving for the maximum absolute error between the MPC and the NN is faster than the verification methods. But for bigger problems the verification problems are solved more quickly, although solving an indefinite MIQP is considered to be more difficult. Interestingly, the direct verification method is slower to solve than the sufficient method. Potentially, the cascading structure of the direct verification problem makes the problem harder to solve.

\begin{diffadd}
\begin{remark}
    While solving the described verification problems can be very costly, the computation times are still manageable in practice. The alternative are statistical methods to verify stability. While these methods can be faster if only low probabilistic guarantees are needed, they tend to scale very badly if a high probabilistic confidence is required. For example, in \cite{hertneck2018} a problem similar in size to \eqref{eq:double_integrator} is considered. They need roughly 500 hours to verify that the approximate learned controller is stable for the closed loop system, with a probability of only 99\%.
\end{remark}
\end{diffadd}

\section{Summary}

We have introduced a flexible framework that allows one to formulate a variety of verification problems as MILPs or indefinite MIQPs. In particular, we have shown that not only NN with linear layers and ReLU activation functions are representable in such a framework, but also more complex layer structures such as parametric QPs and piecewise affine functions on polyhedral sets can be represented and combined.

Taking these formulations, we then constructed an MILP problem, which gives us the maximum approximation error between a given MPC and its NN approximation. Together with an MPC formulation robust in the input (ex. tube MPC), we can show constraint satisfaction and stability in the feasible region of the MPC controller. Alternatively, we provided indefinite MIQP formulations to directly verify stability and calculate an outer approximation of the stability region.

We compared our approach against a Lipschitz based method and showed that our approach outperforms it, being able to verify asymptotic stability in cases where the Lipschitz based method could only show ISS or could not show stability at all.

\bibliographystyle{IEEEtran}
\bibliography{refs}

\appendices

\section{Improved Bounds for Parametric QP Formulations} \label{sec:bound_improvementes}

As we have seen in Section~\ref{subsec:parametric_qps}, the complementarity constraints of the KKT conditions can be linearized via ``big-M" constraints \eqref{eq:KKT_complementarity_big_M}. In practice, choosing the constant $M$ too large can lead to numerical instability. Hence, ideally, one should calculate a tight bound beforehand \cite{bemporad1999}. Also, unless the parametric QP forms the last layer, then bounds on the output $y$ may even be required by the next layer.

Assuming we are given or have calculated upper and lower bounds on the input, $\underline{x} \leq x \leq \overline{x}$, we can calculate
\begin{equation} \label{eq:bound_MILP_primal}
\begin{aligned}
    M^z_i = \max_{z, \lambda, \mu_\text{eq}} \quad & g_i - F_iz \\
    \text{s.t.}~ \quad & \underline{x} \leq Dz \leq \overline{x} \\
    & Az = b, Fz \leq g \\
    & Pz + q + A^T\mu_\text{eq} + F^T\lambda = 0 \\
    & \lambda \geq 0
\end{aligned}
\end{equation}
for $i = 1, \dots, n_\text{ineq}$ and
\begin{equation} \label{eq:bound_MILP_lambda}
\begin{aligned}
    M^\lambda_i = \max_{z, \lambda, \mu_\text{eq}} \quad & \lambda_i \\
    \text{s.t.} ~\quad & \underline{x} \leq Dz \leq \overline{x} \\
    & Az = b, Fz \leq g \\
    & Pz + q + A^T\mu_\text{eq} + F^T\lambda = 0 \\
    & \lambda \geq 0
\end{aligned}
\end{equation}
for $i = 1, \dots, n_\text{ineq}$. Then, we can set
\begin{equation*}
    M = \max(\max_i M^z_i, \max_i M^\lambda_i).
\end{equation*}
If the linear programs \eqref{eq:bound_MILP_primal} or \eqref{eq:bound_MILP_lambda} are unbounded, one may restrict their feasible sets by including the MILP constraints
\begin{subequations}
\begin{align}
    & 0 \leq \lambda_i \leq M\beta_i \\
    & 0 \leq g_i - F_iz \leq M(1-\beta_i).
\end{align}
\end{subequations}
Even though the resulting MILPs can be solved orders of magnitude faster than the actual verification problem, their solution may still take a considerable amount of time because we are solving $2n_\text{ineq}$ problem instances. In practice, these MILPs are not solved to optimality but are interrupted after a predefined timeout, and the best LP-Relaxation bound is used.

\section{Technical Proofs}
\label{subsec:proof_lem:inf_norm_MILP}

\begin{proof}[Proof of Lemma~\ref{lem:inf_norm_MILP}]
As the polytopic set $\mathbb{X}$ is bounded, there exists a bounded box $[\underline{t},\overline{t}] \supseteq \mathbb{X}$. For every $i=1,\dots,m$, introduce a continuous decision variable $z_i \geq 0$ as well as a binary decision variable $\beta_i\in \{0,1\}$ that satisfy
\begin{subequations} \label{eq:norm_const}
\begin{align}
    t_i &\leq z_i \leq t_i + 2\overline{t}\beta_i, \label{eq:norm_const_a} \\
    -t_i &\leq z_i \leq -t_i -2\underline{t}(1-\beta_i). \label{eq:norm_const_b}
\end{align}
\end{subequations}
If $t_i > 0$, then \eqref{eq:norm_const_b} implies that $\beta_i=0$, and \eqref{eq:norm_const_a} implies that $z_i=t_i$. If $t_i < 0$, on the other hand, then \eqref{eq:norm_const_a} implies that $\beta_i=1$, and \eqref{eq:norm_const_b} implies that $z_i=-t_i$. If $t_i=0$, finally, then \eqref{eq:norm_const} implies that $z_i=0$ irrespective of $\beta_i$. In any case, we thus have $z_i=|t_i|$.
In order to express the infinity norm $f(t)=\max(z_1,\dots,z_m)$ in terms of linear constraints, we introduce continuous decision variables $\tau$ and $\tilde{z}_i$ for every $i=1,\dots,m$ as well as the binary decision variables $\tilde{\beta}_i \in \{0,1\}$, $i=1,\dots,m$, subject to the constraints
\begin{subequations}
\begin{align}
    1 &= \sum_{i=1}^m \tilde{\beta}_i, \label{eq:max_ineq_a} \\
    \tau &= \sum_{i=1}^m \tilde{z}_i, \label{eq:max_ineq_b} \\
    \tau &\geq z_i \hspace{28.6mm} \forall i=1,\dots,m, \label{eq:max_ineq_c} \\
    0 &\leq \tilde{z}_i \leq \max(|\underline{t}|,|\overline{t}|)\tilde{\beta}_i \quad \forall i=1,\dots,m, \label{eq:max_ineq_d} \\
    \tilde{z}_i &\leq z_i \hspace{28.6mm} \forall i=1,\dots,m. \label{eq:max_ineq_e}
\end{align}
\end{subequations}
The forcing constraints \eqref{eq:max_ineq_d} ensure that $\tilde{z}_i = 0$ whenever $\tilde{\beta}_i=0$, whereas \eqref{eq:max_ineq_a} ensures that $\tilde{\beta}_i=1$ for exactly one $i$. Hence, \eqref{eq:max_ineq_b} sets $\tau = \tilde{z}_i$ for the unique $i$ with $\tilde{\beta}_i=1$. By \eqref{eq:max_ineq_e}, we thus have $\tau = \tilde{z}_i \leq z_i \leq \max(z_1,\dots,z_m)$. The constraint \eqref{eq:max_ineq_c}, finally, ensure the converse inequality $\tau \geq z_i \geq \max(z_1,\dots,z_m)$. Hence, $f(t)=\|t\|_\infty$ is MILP-representable, and its graph can be expressed as
\begin{equation*}
    \gr(f) = 
    \left\{
    (t,\tau) \;\middle|\; \begin{aligned}
    \exists \beta, \tilde{\beta} &\in \{0,1\}^m, \; z, \tilde{z} \in \mathbb{R}^m: \\ \forall i& = 1,\dots,m: \\
    t_{\phantom{i}} &\in \mathbb{X}, \\
    t_i &\leq z_i \leq t_i + 2\overline{t}\beta_i, \\
    -t_i &\leq z_i \leq -t_i -2\underline{t}(1-\beta_i), \\
    1 &= \sum_{j=1}^m \tilde{\beta}_j, ~\tau = \sum_{j=1}^m \tilde{z}_j, \\
    0 &\leq \tilde{z}_i \leq \max(|\underline{t}|,|\overline{t}|)\tilde{\beta}_i, \\
    \tilde{z}_i &\leq z_i \leq \tau
    \end{aligned}
    \right\}.
\end{equation*}
This observation completes the proof.
\end{proof}

\subsection{Proof of Corollary~\ref{cor:1_norm}} \label{subsec:proof_cor:1_norm}

\begin{proof}
The proof widely parallels that of Lemma~\ref{lem:inf_norm_MILP}. However, there is no need to introduce decision variables $\tilde{\beta}$ and $\tilde{z}$ because we can directly set
\begin{equation*}
    \tau = \|t\|_1 = \sum_{i=1}^m z_i.
\end{equation*}
Thus, the claim follows.
\end{proof}

\end{document}